\documentclass{article}
\usepackage[utf8]{inputenc}
\usepackage{amsmath,amssymb}
\usepackage{comment}

\usepackage{svg}

\usepackage[english]{babel}
\usepackage{blindtext}
\usepackage{amsthm}
\usepackage{graphicx}
\graphicspath{ {./images/} }
\usepackage{float}
\usepackage{caption}
\usepackage{subcaption}
\usepackage{lipsum}
\usepackage{mwe}
\usepackage{setspace}
\usepackage{url}

\DeclareMathAlphabet{\mathpzc}{OT1}{pzc}{m}{it}
\usepackage{mathrsfs}


\DeclareMathOperator{\EX}{\mathbb{E}}
\DeclareMathOperator{\Var}{Var}

\usepackage{gensymb}

\DeclareMathOperator\artanh{artanh}
\DeclareMathOperator\arcosh{arcosh}

\newtheorem{theorem}{Theorem}[section]
\newtheorem{lemma}[theorem]{Lemma}
\newtheorem{corollary}{Corollary}[theorem]

\theoremstyle{remark}

\theoremstyle{definition}

\def\keywordname{{\bfseries \emph Keywords}}%
\def\keywords#1{\par\addvspace\medskipamount{\rightskip=0pt plus1cm
\def\and{\ifhmode\unskip\nobreak\fi\ $\cdot$
}\noindent\keywordname\enspace\ignorespaces#1\par}}

\title{\textbf{Emergence of Minkowski-Spacetime
by Simple Deterministic Graph Rewriting}}
\author{Gabriel Leuenberger 
}
\date{2021\footnote{Our earliest draft was published in April 2020 \cite{leuenbergerApril}.}}

\begin{document}

\maketitle

\begin{abstract}
The causal set program as well as the Wolfram physics project leave open the problem of how a graph that is a (3+1)-dimensional Minkowski-spacetime according to its simple geodesic distances, could be generated solely from simple deterministic rules.
This paper provides a solution by describing simple rules that characterize discrete Lorentz boosts between 4D lattice graphs, which combine further to form Wigner rotations that produce isotropy and lead to the emergence of the continuous Lorentz group and the (3+1)-dimensional Minkowski-spacetime.
On such graphs, the speed of light, the proper time interval, as well as the proper length are all shown to be highly accurate.
\end{abstract}

\section{Introduction}
The Causal Set Program \cite{bombelli1987space,reid1999introduction,dribus2017discrete} and the recent Wolfram Physics Project \cite{wolfram2002new,wolfram2020class}, both seek to uncover the network of causal relations at the plank-scale, that is the fundamental structure of space-time. When zooming out to the macroscopic scale, this structure should at least manifest the following properties of Minkowski spacetime:\\
- (3+1)-dimensionality: One temporal and three spatial dimensions.\\
- Apparent continuity of space and time.\\
- Lorentz symmetry, which includes:\\
\hspace*{10pt} - Isotropy, i.e.: Rotational invariance.\\
\hspace*{10pt} - Accurate time dilation.\\
\hspace*{10pt} - Constancy of the maximal speed, i.e.: the speed of light.\\
\hspace*{10pt} - Euclidean distance (Pythagorean theorem can be derived).\\
In this paper we present novel discrete structures generated solely by local deterministic rules of remarkable simplicity, that fulfill all of the above requirements. 

This is different from previous approaches to Lorentz symmetry, such as randomly sprinkled causal sets, which date back to Bombelli \cite{bombelli1987space, bombelli2009discreteness, dribus2017discrete}. 
While randomly sprinkled causal sets were quite useful for study purposes, their construction process employed a preexisting continuous (3+1)-dimensional space, which becomes unnecessary with our new approach.

Previously, also Bolognesi \cite{bolognesi2013algorithmic, bolognesi2017spacetime} achieved Lorentz-symmetry deterministically, without presupposing a continuous space. However, his emerging space-times were restricted to 1+1 dimensions only.
Paradoxically, our approach can generate 2+1 or 3+1 dimensions, while in some ways being even simpler than Bolognesi's approach. This is because these additional spatial dimensions are of help when letting Lorentz-symmetry emerge.

A further previous approach to Lorentz symmetry by Gorard \cite{gorard2020some}, was to avoid the simplest distance measures and instead define a more sophisticated distance-measure, based on random walks and the Wasserstein transportation metric, which was quite useful for deriving aspects of general relativity with the Wolfram model. 
We instead provide concrete, fully described graphs that succeed at remaining Lorentz-symmetric, even under simple graph geodesic distance measures. We hope to inspire the further refinement of the general theories through these graphs.

In \mbox{Section \ref{minkowski}} we describe our main ideas and theorems in terms of directed graphs. 
Such structures could in principle then be reformulated and generated within the frameworks of causal sets, Wolfram models, pure lambda calculus, graph rewriting systems, and others.
We show algebraically that the properties of Minkowski spacetime emerge at the large scale.
We then show the resulting accuracy of the speed of light as well as the accuracy of the proper time interval.
However, before moving to \mbox{Section \ref{minkowski}}, it is helpful to firstly understand an analogous graph, from which only the two-dimensional Euclidean plane emerges. 
We provide this entry point in the following section.
\\
\\

\section{The Emergent Euclidean Plane}\label{Euclid}
This section is concerned with the emergence of the two-dimensional Euclidean plane from an undirected graph.
We describe its construction and show its geometric properties algebraically.
Note that this graph is not even a subgraph of our spacetime-graph from \mbox{Section \ref{minkowski}}. 
However, their construction procedures as well as their mathematical treatments are analogous, which is why we recommend to understand \mbox{Section \ref{Euclid}} here before moving to \mbox{Section \ref{minkowski}}.

\subsection{One Pair of Interlaced Lattice Graphs}\label{shortest}
The \textit{shortest path metric}, also known as the \textit{geodesic distance} or simply, the $distance$ $d(U,V)$ between two vertices $U$ and $V$ of a graph, is the least number of steps across its edges to travel from $U$ to $V$.
It is obvious that, while an infinite square lattice graph\footnote{Lattice graphs are also known as gird graphs or mesh graphs.}, at the large scale, can approximate $\mathbb{R}^2$, its geodesic distance will approximate the Manhattan distance instead of the desired Euclidean distance \cite{deza2009encyclopedia}. 
It is thus often assumed that such regular structures must be avoided and some irregular structure must be used instead. 
Our construction, however, involves multiple interlacing square lattices, each of which represents a different angle of orientation\footnote{A vaguely distantly related concept was developed for fluid dynamics simulations \cite{wolfram1986cellular}
\cite{chen2006recovery}.
}, which leads to a geodesic distance that converges to the Euclidean distance, as we will show in \mbox{Subsection \ref{multi}}. 

Here we firstly describe how to interlace only two infinite square lattice graphs with each other in order to form a graph that we call $\mathcal{E}_2$ .
Let the two infinite SLGs (square lattice graphs) be called $\mathscr{L}$ and $\mathscr{L}'$. 
They are both subgraphs of $\mathcal{E}_2$ . $\mathscr{L}$ and $\mathscr{L}'$ share vertices with each other, i.e.: there are some vertices that are both part of $\mathscr{L}$ as well as part of $\mathscr{L}'$; let these be called 'shared vertices'. 
We postulate two simple rules in what follows. 
To formulate our first rule, it is helpful to assign the four cardinal directions to the steps taken on the lattices and you may imagine instructing a taxicab through the rectilinear Manhattan. 
Note however that the cardinal directions assigned on $\mathscr{L}$ will \textit{not} be aligned with the cardinal directions assigned on $\mathscr{L}'$.\\
\\
\textbf{Rule 1:} For each shared vertex $A$ and for each cardinal direction $D$, there is a shared vertex $B$, such that both of the following two paths are correct:\\
-Path on $\mathscr{L }$: Starting at $A$, take two steps in direction $D$, \\ then take a \textit{right} turn and one step to arrive at $B$.\\
-Path on $\mathscr{L'}$: Starting at $A$, take two steps in direction $D$, \\ then take a \textit{left} turn and one step to arrive at $B$.\\
\\
\textbf{Rule 2:} For $\mathscr{L}$ and $\mathscr{L}'$, their shared vertices never neighbour each other.\\
\\
If these two rules are followed, then the graph $\mathcal{E}_2$ is obtained, that is illustrated in \mbox{Figure \ref{fig:2L}}:
\begin{figure}[H]
    \centering
    \includegraphics[scale = 0.3875]{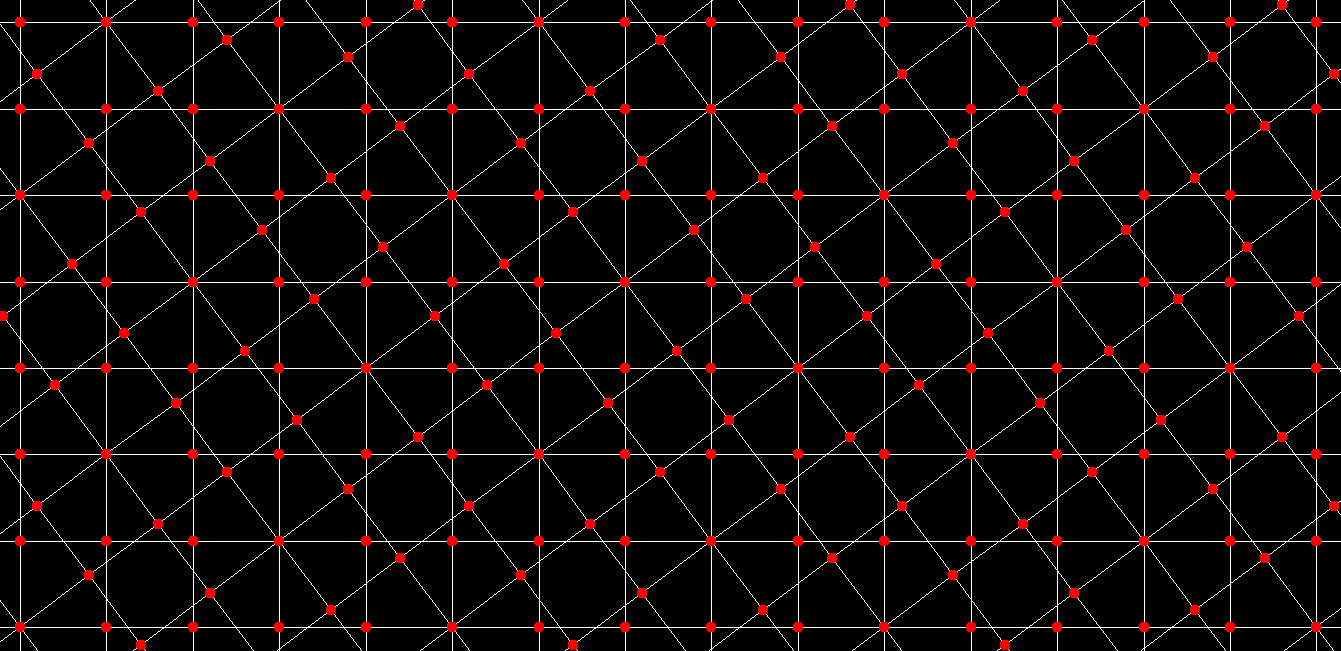}
    \caption{Graph $\mathcal{E}_2$ that was formed from two interlacing square lattice graphs by simple local rules. Shared vertices can be seen to have eight edges.}
    \label{fig:2L}
\end{figure}

It can easily be seen that each vertex of $\mathcal{E}_2$ either neighbours a shared vertex, or is itself a shared vertex, which will be of importance in the next subsection. 
Note that we now constructed this graph $\mathcal{E}_2$ solely from simple local rules without assigning coordinates to vertices and without performing arithmetics. 
We now will, however, start to assign coordinates to all vertices for study purposes and only later return to a coordinate-free formulation, that is in \mbox{Corollary \ref{voidOfCoordinates}} .
Firstly, we assign integer coordinates $(x,y) \in \mathbb{Z}^2$ to all vertices of $\mathscr{L}$ such that two vertices $P$ and $Q$ are connected by an edge exactly if $|\Vec{P}-\Vec{Q}|=1 $ . 
We can then use a second integer coordinate system for the vertices of $\mathscr{L'}$ that works identically. 
Note that shared vertices $A$ will have coordinates $\Vec{A}$ on $\mathscr{L}$ but will simultaneously also have different coordinates $\Vec{A}'$ on $\mathscr{L'}$; 
except for the central vertex $O$ which we define to have zero coordinates on both lattices, i.e.: 
$\Vec{O}=(0,0)=\Vec{O}'$. We can now reformulate the interlacing by using linear algebra. 
By applying rule 1 successively, we obtain the following equation for all shared vertices $A$:\\
\\
$\begin{bmatrix}
1 & 2\\
2 & -1
\end{bmatrix}
\Vec{A} \ \
= \ \
\begin{bmatrix}
-1 & 2\\
2 & 1
\end{bmatrix}
\Vec{A}' \ \
$\\
\\
This linear equation an then be rewritten equivalently as follows:
\\
$\Vec{A} \ \
= \ \
\frac{1}{5}
\begin{bmatrix}
3 & 4\\
-4 & 3
\end{bmatrix}
\Vec{A}' \ \
= \ \
\begin{bmatrix}
\cos(\theta) & \sin(\theta)\\
-\sin(\theta) & \cos(\theta)
\end{bmatrix}
\Vec{A}' \ \ ,
$\\
or alternatively:
\\
$\Vec{A}' \ \
= \ \
\frac{1}{5}
\begin{bmatrix}
3 & -4\\
4 & 3
\end{bmatrix}
\Vec{A} \ \
= \ \
\begin{bmatrix}
\cos(\theta) & -\sin(\theta)\\
\sin(\theta) & \cos(\theta)
\end{bmatrix}
\Vec{A} \ \ ,
$\\
\\
where:\ \
$\theta \ = \ 2 \arctan(\frac{1}{2}) \ = \ 0.9273.. \ = \ 53.13..^{\circ} $\\
\\
Note that these are rotation matrices with rotation angle $\theta$. 
So far, we only dealt with the integer-valued coordinates of the shared vertices, but we can in principle use the same rotation matrices to map all of the vertices of $\mathcal{E}_2$ onto the real-valued Euclidean plane $\mathbb{R}^2$, such that each edge corresponds to a Euclidean distance of one. 
Since such a mapping is possible, $\mathcal{E}_2$ is itself also a \textit{unit distance graph}. 
Note however, that one pair of SLGs is insufficient to further approximate Euclidean distance. 
In the following subsection, this is solved by extending $\mathcal{E}_2$ to an arbitrary number of interlaced lattice graphs.
\\

\subsection{Multitudinous Interlaced Lattice Graphs}\label{multi}
The previously used pair of lattice graphs $(\mathscr{L},\mathscr{L'})$ is now replaced by an ordered list of lattice graphs $[L,L',L'',..]$ . Let $n$ be the length of this list.
To construct a new graph $\mathcal{E}_n$, we re-use the same two rules that we introduced in the previous subsection. 
We apply these rules to each of the ordered pairs of neighbouring elements of the list.
Thus, for example, the rules must hold if we set $(\mathscr{L},\mathscr{L'})=(L,L')$, but must also hold if we set $(\mathscr{L},\mathscr{L'})=(L',L'')$, but must also hold if we set $(\mathscr{L},\mathscr{L'})=(L'',L''')$, and so forth.
To remove ambiguity, we further require that vertices are not shared between the lattice graphs unless required by the previous rules. 

As shown in the previous subsection, applying these rules corresponds to a rotation by the angle $\theta = 2 \arctan(\frac{1}{2}) $ . In our example, this would result in $L'''$ being rotated by $3\cdot\theta$ relative to $L$. For a list of length $n=5$, we illustrate such a construction in
\mbox{Figure \ref{fig:5L}}:
\\
\begin{figure}[H]
    \centering
    \includegraphics[scale = 0.345]{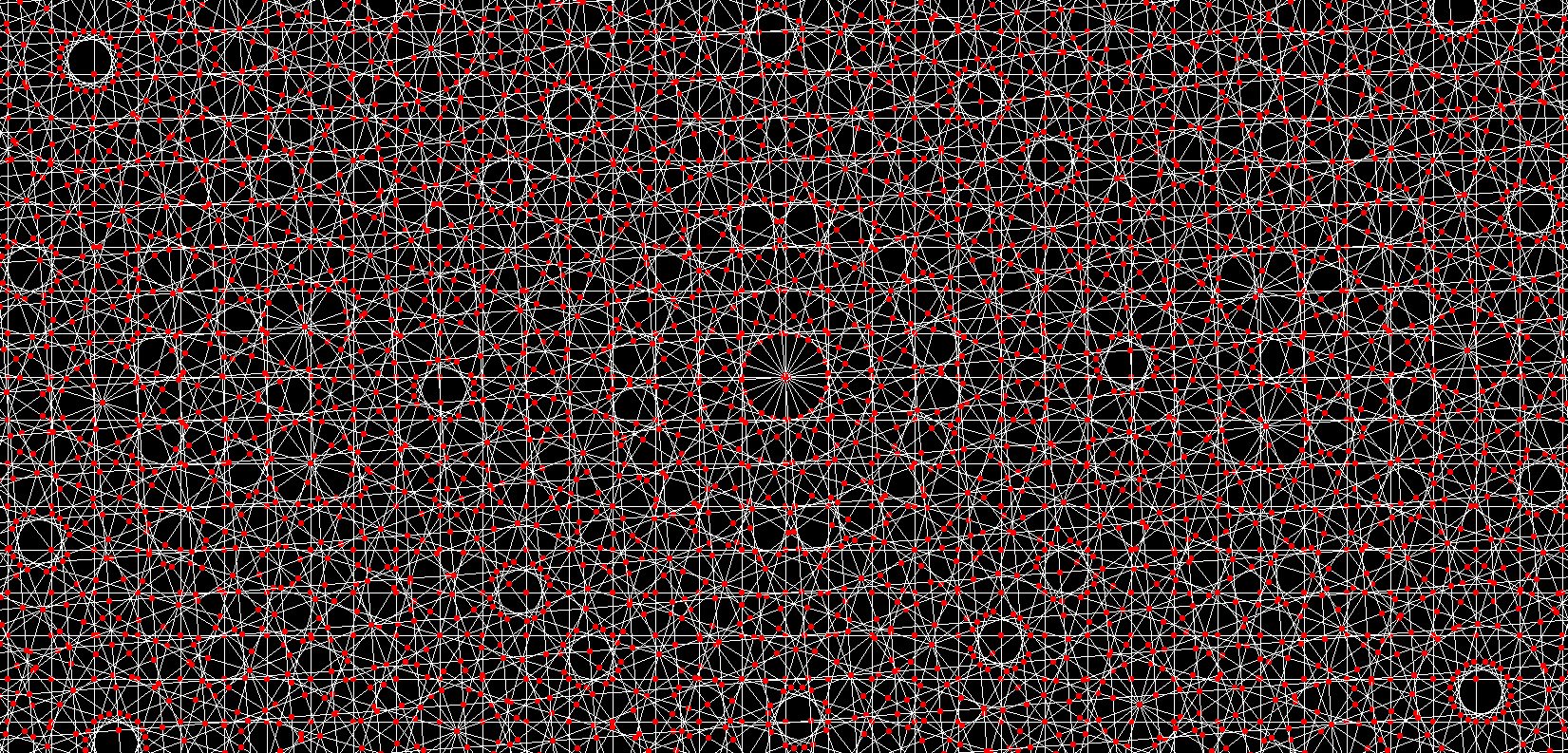}
    \caption{Graph $\mathcal{E}_n$ that was formed from a sequence of interlacing square lattice graphs by simple local rules, for $n=5$.}
    \label{fig:5L}
\end{figure}
Note that the resulting graph $\mathcal{E}_n$ remains a unit distance graph, regardless of the length of the list of SLGs. 
As is true for any unit distance graph, if we assign the corresponding real-valued coordinates $\Vec{U}, \Vec{V} \in \mathbb{R}^2$ to the vertices $U,V$, it follows that the geodesic distance between two vertices, that is the minimal number of steps between them, is greater or equal to the Euclidean distance $|\Vec{U}-\Vec{V}|$ according to their coordinates. 
While this provides us with a lower bound for the geodesic distance on $\mathcal{E}_n$, more interestingly, we shall derive a probabilistic \textit{upper} bound, or rather, the relative deviation of the geodesic distance from the Euclidean distance:
\\
\begin{theorem}\label{euclid}{\textbf{Accuracy of Euclidean Geodesic Distances on $\mathcal{E}_n$}:
}\\
For any sufficiently large $n \in \mathbb{N}$ and for randomly selected vertices $U,V$ of $\mathcal{E}_n$:\\
$
\textit{expected relative error} 
\ := \ \EX(| \frac{d(U,V)-|\Vec{U}-\Vec{V}|}{|\Vec{U}-\Vec{V}|} |)
\ < \ 
\frac{2 \pi^2}{n^2} + \frac{6n}{|\Vec{U}-\Vec{V}|}$
\\
\end{theorem}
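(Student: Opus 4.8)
\medskip
\noindent\textbf{Proof strategy.}
The plan is to bound $d(U,V)$ from above by an explicit path; the matching lower bound $d(U,V)\ge|\Vec{U}-\Vec{V}|$ is free, since $\mathcal{E}_n$ is a unit distance graph and a path of length $\ell$ realises a Euclidean displacement of norm at most $\ell$ (this also shows the quantity inside the expectation is non-negative). It will suffice to construct, for arbitrary vertices $U\neq V$, a path of length at most $|\Vec{U}-\Vec{V}|\bigl(1+\tfrac{2\pi^{2}}{n^{2}}\bigr)+6n$; dividing by $|\Vec{U}-\Vec{V}|$ then gives the bound, and the expectation only serves to average out the $O(1)$ rounding terms and the (uniformly bounded) dependence on the direction of $\Vec{D}:=\Vec{V}-\Vec{U}$. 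A monotone staircase along the single lattice whose axes best match the direction of $\Vec{D}$ has length $|\Vec{D}|(|\cos\phi|+|\sin\phi|)=|\Vec{D}|(1+O(1/n))$ with $\phi=O(1/n)$ the residual angular mismatch; this is only first order in $1/n$ and too weak, so instead I will split $\Vec{D}=\Vec{D}_{1}+\Vec{D}_{2}$ along two lattice axes straddling the direction of $\Vec{D}$ from opposite sides.

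First I would pin down the angular data. From $\tan(\theta/2)=\tfrac12$ one has $e^{i\theta}=\tfrac{3+4i}{5}$, whose minimal polynomial over $\mathbb{Q}$ is $5z^{2}-6z+5$; since this is not monic, $e^{i\theta}$ is not an algebraic integer, hence not a root of unity, i.e.\ $\theta$ is an irrational multiple of $\pi$. Since each lattice $L^{(k)}$ of the list (with $L^{(0)}$ the first) is $L^{(0)}$ rotated by $k\theta$ and a square lattice has a quarter-turn symmetry, the axis directions occurring among the $n$ lattices form the set $\{k\theta\bmod\tfrac{\pi}{2}:0\le k<n\}$, which is equidistributed on the circle of circumference $\tfrac{\pi}{2}$. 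By the three-distance (Steinhaus) theorem these $n$ directions cut that circle into arcs of at most three distinct lengths, and a short computation with the continued fraction of $\theta/(\tfrac{\pi}{2})$ should show that for all sufficiently large $n$ the longest arc has length $g_{n}\le 4\pi/n$. Consequently, for the direction of any $\Vec{D}$ there are two lattices of the list whose coordinate axes straddle that direction within a combined angle $g\le g_{n}$.

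Then I would assemble and measure the path. As already observed for $\mathcal{E}_2$, every vertex of a lattice either is, or is adjacent to, a vertex it shares with a prescribed neighbouring lattice, so one step carries us from any lattice to the next, and iterating moves us between any two lattices of the list in fewer than $n$ hops. Writing $\hat{u}_{1},\hat{u}_{2}$ for unit vectors along the two straddling axes, with the angle between them equal to $g\le g_{n}$, we have $\Vec{D}=\lambda_{1}\hat{u}_{1}+\lambda_{2}\hat{u}_{2}$ for some $\lambda_{1},\lambda_{2}\ge 0$. The path is then: hop from $U$'s lattice to the first straddling lattice; take $\lceil\lambda_{1}\rceil$ unit steps along $\hat{u}_{1}$; hop to the second straddling lattice; take $\lceil\lambda_{2}\rceil$ unit steps along $\hat{u}_{2}$; hop onto $V$'s lattice and step to $V$. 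The two middle segments have total length $\lceil\lambda_{1}\rceil+\lceil\lambda_{2}\rceil\le\lambda_{1}+\lambda_{2}+2$, and since $\lambda_{1}^{2}+\lambda_{2}^{2}\ge 2\lambda_{1}\lambda_{2}$ gives $|\Vec{D}|^{2}=\lambda_{1}^{2}+\lambda_{2}^{2}+2\lambda_{1}\lambda_{2}\cos g\ge 2\lambda_{1}\lambda_{2}(1+\cos g)$, one finds $(\lambda_{1}+\lambda_{2})^{2}=|\Vec{D}|^{2}+2\lambda_{1}\lambda_{2}(1-\cos g)\le|\Vec{D}|^{2}/\cos^{2}(g/2)$, whence
\[
\lambda_{1}+\lambda_{2}\ \le\ \frac{|\Vec{D}|}{\cos(g/2)}\ \le\ |\Vec{D}|\Bigl(1+\tfrac{g_{n}^{2}}{8}\Bigr)\ \le\ |\Vec{D}|\Bigl(1+\tfrac{2\pi^{2}}{n^{2}}\Bigr).
\]
The remaining contributions---at most three sequences of fewer than $n$ hops, a compensating correction of the bounded spatial drift they introduce, and the $O(1)$ roundings---can be routed so as to add fewer than $6n$ steps altogether; dividing by $|\Vec{U}-\Vec{V}|$ and taking the expectation yields the claim.

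The main obstacle I anticipate is the uniform angular estimate $g_{n}\le 4\pi/n$: this is the only genuinely number-theoretic ingredient and amounts to controlling the partial quotients of $\theta/(\tfrac{\pi}{2})$ over the relevant range of $n$. Granting it, the trigonometric bound on $\lambda_{1}+\lambda_{2}$ is elementary and the linear overhead from the inter-lattice hops and the drift correction is routine bookkeeping.
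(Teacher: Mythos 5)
Your construction of the short path (two straddling lattice axes, the decomposition $\Vec{D}=\lambda_1\hat u_1+\lambda_2\hat u_2$, the bound $\lambda_1+\lambda_2\le|\Vec{D}|/\cos(g/2)$, and the $O(n)$ overhead for inter-lattice hops and drift) matches the paper's path in substance, and your Niven-style irrationality argument --- $e^{i\theta}=\tfrac{3+4i}{5}$ has non-monic minimal polynomial $5z^2-6z+5$, hence is not a root of unity --- is a clean, self-contained replacement for the paper's citation of Margolius on Plouffe's constant. Where you genuinely diverge is the angular estimate, and that is where the gap sits. You claim that for all sufficiently large $n$ the largest gap among the $n$ directions $\{k\theta\bmod\tfrac{\pi}{2}\}$ satisfies $g_n\le 4\pi/n$, to be verified by ``a short computation with the continued fraction of $\theta/(\tfrac{\pi}{2})$.'' The three-distance theorem does not give this: it only bounds the largest gap by roughly $1/q_k$, where $q_k$ is the largest convergent denominator of $\alpha=4\arctan(\tfrac12)/\pi$ not exceeding $n$, so for $n$ lying between $q_k$ and $q_{k+1}$ the ratio of the largest gap to $1/n$ can be as large as the partial quotient $a_{k+1}$. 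A uniform bound $g_n\le C/n$ for all large $n$ is therefore \emph{equivalent} to $\alpha$ having bounded partial quotients, which is an open problem (as it is for essentially every naturally occurring transcendental number) and cannot be settled by any finite computation, since ``for all sufficiently large $n$'' quantifies over the entire tail of the continued fraction. Equidistribution, which is all that the irrationality of $\alpha$ buys you, gives only $g_n=o(1)$, not $O(1/n)$.

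It is worth seeing how the paper avoids (or rather, conceals) this issue: it does not attempt a deterministic gap bound at all, but invokes equidistribution to justify \emph{modelling} the $n$ orientations as independent uniform samples on the circle, after which the minimal misalignment $\varepsilon$ acquires an explicit distribution $F(\varepsilon)=(1-(1-p_i)^n)(1-(1-p_j)^{n-1})$ and the integrals give $\EX(\varepsilon)<\pi/n$ and $\Var(\varepsilon)<\pi^2/n^2$, hence the stated $2\pi^2/n^2$. That step is itself a heuristic (the deterministic sequence $\{k\alpha\}$ does not have i.i.d.\ gap statistics, and $\EX(\varepsilon^2)$ is still governed by $g_n^2$), so your proposal has in effect made explicit a Diophantine assumption that the paper's probabilistic phrasing leaves implicit. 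If you want a complete argument along your deterministic lines, you must either prove the badly-approximable property of $\alpha$ (out of reach) or retreat to the averaged formulation and the i.i.d.\ model as the paper does; as written, the step ``$g_n\le 4\pi/n$ for all sufficiently large $n$'' is an unfilled and unfillable gap.
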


\begin{proof}
Obviously, to each of $\mathcal{E}_n$'s SLGs, an angle of orientation can be assigned that is an integer multiple of $\theta$. 
One can now ask whether it is possible that, after multiple successive rotations by $\theta$, the initial orientation is restored, i.e.: 
In radians: 
Is there any $n \in \mathbb{N}$ such that $n\theta \equiv 0  \mod 2\pi $ ? 
Or equivalently: 
Is there any $n \in \mathbb{N}$ such that $\frac{n\theta}{2\pi} \in \mathbb{N}$ ? 
Note that the product between a natural number and an irrational number is always an irrational number. 
The answer to our question therefore depends on whether $\frac{\theta}{2\pi}$ is rational or not. 
The number in question $\frac{\theta}{2\pi} = \frac{2\arctan(\frac{1}{2})}{2\pi} = \frac{\arctan(\frac{1}{2})}{\pi}$ is called \textit{Plouffe's constant}\cite{plouffe1998computation}, and it was proven to be a transcendental number  by Margolius\cite{margolius2003plouffe}. 
It is hence also an irrational number, i.e.:
$\frac{\arctan(\frac{1}{2})}{\pi} \not\in \mathbb{Q}$ .
From this theorem it thus follows that successive rotations by angle $\theta$ will never return the orientation to its original state, and thus all orientations of the SLGs are different from each other. 
If we let the length $n$ of the list go towards infinity, we can furthermore use the \textit{equidistribution theorem}\cite{weyl1916gleichverteilung}; 
it implies that the angles of orientation of the lattice graphs are uniformly distributed on $[0,2\pi]$. 
This means that for sufficiently large $n$, the angles of orientation of the lattice graphs are approximately uniformly distributed on $[0,2\pi]$. 

We can use these approximately uniformly distributed directions to construct a path between two vertices $U$ and $V$ that will provide us with an upper bound for the geodesic distance $d(U,V)$. 
We are concerned with the case where the distance $d(U,V)$ is much larger than $n$.
Our heuristic approach to constructing a short path from $U$ to $V$ involves two SLGs, $L_i$ and $L_j$, that have orientations that are approximately aligned with the straight line from $U$ to $V$. 
Note that $L_i$ and $L_j$ are usually not neighbours in the list. 
To take the most direct path, the largest portion of the distance of the path is covered within these three approximately aligned SLGs. 
The path from $U$ to $V$ can be split into the following sequence of  five paths:
\\
(1) Shortest path from $U$ to $L_i$.\\
(2) Straight path within $L_i$ on line $l_i$.\\
(3) Shortest path from $l_i$ in $L_i$ to $l_j$ in $L_j$.\\
(4) Straight path within $L_j$ on line $l_j$.\\
(5) Shortest path from $L_j$ to $V$.\\
While the straight paths (2) and (4) are located within single SLGs, the other paths (1), (3), and (5) are not straight and lead through multiple different SLGs in as few steps as possible. 
Recall from the previous subsection that each vertex of $\mathcal{E}_2$ either neighbours a shared vertex, or is itself a shared vertex, which means that each vertex of $\mathcal{E}_2$ is at a distance from $L$ or $L'$ that is less than two. 
Similarly, it can easily be seen for $\mathcal{E}_n$ with $n=3$ that each vertex of $\mathcal{E}_n$ is at a distance from $L$, $L'$, or $L''$ that is less than three. 
And more generally, for all $n \in \mathbb{N} $ : 
For each SLG and each vertex of $\mathcal{E}_n$, the distance between the SLG and the vertex is less than $n$. 
Therefore, both paths (1) and (5) are of a length of less than $n$. Furthermore, the straight lines $l_i$ and $l_j$ cross such that the path (3) between them will also be of a length of less than $n$, analogously to (1) and (5). 
From these three equal upper bounds, we now get a first upper bound for the total length: $d(U,V)$ is less than or equal to $3n$ plus the length of (2) plus the length of (4). 
To further specify this bound, we now move on to estimate the lengths of the two straight paths (2) and (4). 
Let $\Vec{U}, \Vec{V} \in \mathbb{R}^2$ be the coordinate vectors assigned to the vertices $U$ and $V$ in the Euclidean plane and let $\overline{UV}$ be the straight line connecting them. 
The lengths of the two paths depend on how well the straight lines $l_i$ and $l_j$ are aligned with $\overline{UV}$, i.e.: 
It depends on how small the angles $\measuredangle(l_i,\overline{UV})$ and $\measuredangle(l_j,\overline{UV})$ are.
Let $\varepsilon$ be a value that is larger than both of these angles, i.e.: $|\measuredangle(l_i,\overline{UV})| \leq: \varepsilon :\geq |\measuredangle(l_j,\overline{UV})|$ . Consider a triangle that has two internal angles that are equal to $\varepsilon$.
Due to this symmetry, it will have two sides that are equally long and whose summed length will be equal to $\frac{d}{\cos(\varepsilon)}$, where $d$ is the length of the remaining side. It is then easy to see that a triangle that has two internal angles that are less than or equal to $\varepsilon$, will have two sides whose summed length will be less than or equal to $\frac{d}{\cos(\varepsilon)}$, where $d$ is again the length of the remaining side. 
Analogously, the sum of the lengths of the two paths (2) and (4) is less than $\frac{|\Vec{U}-\Vec{V}|}{\cos(\varepsilon)} +3n$ , where $3n$ have been added to account for the possible increase in the length of the paths (2) and (4) due to the changes in location when moving through each of the three paths (1), (3), and (5), that are each limited to a radius of $n$.
This is not to be confused with the previously obtained bound $3n$ on the sum of lengths of (1), (3), and (5) themselves. By combining both together, we obtain the bound shown in the following inequality:\\
$d(U,V) \ \leq \ \frac{|\Vec{U}-\Vec{V}|}{\cos(\varepsilon)} + 6n$
\\
We can now reshape this inequality to show the relative error of $d(U,V)$, i.e. the deviation of the geodesic distance from the Euclidean distance relative to the Euclidean distance itself:
\\
\textit{relative error} $\ := \ \frac{d(U,V)-|\Vec{U}-\Vec{V}|}{|\Vec{U}-\Vec{V}|} \ \leq \ \frac{1}{\cos(\varepsilon)} + \frac{6n}{|\Vec{U}-\Vec{V}|} - 1$
\\
Recall that in our heuristic approach, $L_i$ and $L_j$ are selected from the list of SLGs, such that their orientations allow for the angles $\measuredangle(l_i,\overline{UV})$, $\measuredangle(l_j,\overline{UV})$ to be minimal. The larger $n$ is, the more SLGs with different orientations there are available to choose from, which allows for a smaller $\varepsilon$ to exist, which implies a smaller relative error.
It now remains to be estimated how small this $\varepsilon$ could be depending on $n$. By observing the case of $n
=2$, we can see that a minimized $\varepsilon$ must be less than or equal to $\theta$, which then obviously also holds for all larger $n$. We therefore define the range of $\varepsilon$ as follows: $\forall n>1: \varepsilon :\in [0,\theta]$ . Within this range $[0,\theta]$ the following inequality holds: \ \
$\frac{1}{cos(\varepsilon)} \ \leq \ 1+\varepsilon^2 $ .\\
By combining this inequality with the previous relative error's inequality we get a new formula:\\
\textit{relative error} $\ \leq \ \varepsilon^2 + \frac{6n}{|\Vec{U}-\Vec{V}|}$
\\
We can use this simpler formula in combination with the variance identity $\Var(\varepsilon) = \EX(\varepsilon^2) - \EX(\varepsilon)^2$ to calculate the expected relative error:
\\
\textit{expected relative error} 
$\ \leq \ \EX( \ \varepsilon^2 + \frac{6n}{|\Vec{U}-\Vec{V}|} \ )
\ = \ \EX(\varepsilon)^2 + \Var(\varepsilon) + \frac{6n}{|\Vec{U}-\Vec{V}|}$
\\
In order to complete this estimate we now only need to calculate the expectation value $\EX(\varepsilon)$ and the variance $\Var(\varepsilon)$ of a minimal $\varepsilon$. 

Let $\varphi$ be an angle between the orientation of an SLG and the line $\overline{UV}$ . 
As previously discussed, the angles of orientation of SLGs are approximately uniformly distributed for sufficiently large $n$ . 
$\varphi$ can therefore be sampled from a uniform distribution over $[ -\pi , \pi ] $ . 
The probability $p$ that $\varphi$ will lie within an interval of size $\varepsilon$ will thus be equal to $\frac{\varepsilon}{2\pi} $ .
\\
$p \ := \ P(\varphi \in [0,\varepsilon] ) 
\ = \ \frac{ \varepsilon}{2 \pi}$
\\
Since $l_i$ is a line that points in one of the four orthogonal directions of an SLG, the probability that a minimized $|\measuredangle(l_i,\overline{UV})|$ will be smaller than or equal to $\varepsilon$, will be four times larger and furthermore doubled because we are taking the absolute value, thus resulting in a probability of $8p $ .
\\
$p_i \ := \ 
P(|\measuredangle(l_i,\overline{UV})| \leq \varepsilon ) 
\ = \ 8p \ = \ 
\frac{4 \varepsilon}{\pi}$
\\
For $l_j$ the situation is analogous, except that the angle $\measuredangle(l_j,\overline{UV})$ should have the correct sign for $l_j$ to intersect with $l_i$ such that $d(U,V)$ will be minimal, i.e: We are interested in the following probability that is halfed due the prescribed sign:
\\
$p_j \ := \ 
P(|\measuredangle(l_j,\overline{UV})| \in [0,\varepsilon] ) 
\ = \ \frac{p_i}{2} \ = \ 4p \ = \ 
\frac{2 \varepsilon}{\pi}$
\\
We can now use $p_i$ and $p_j$ to determine the probability $F$ that the path from $U$ to $V$ can be constructed by choosing $L_i$ out of the list of $n$ SLGs, and then choosing $L_j$ out of the remaining $(n-1)$ SLGs, given the constraint $\varepsilon$ :
\\
$F(\varepsilon) \ := \ 
(1-(1-p_i)^n)\cdot(1-(1-p_j)^{n-1})$
\\
$= (1-(1-\frac{4 \varepsilon}{\pi})^n)\cdot(1-(1-\frac{2 \varepsilon}{\pi})^{n-1})$\\
$\geq (1-(1-\frac{2 \varepsilon}{\pi})^n)\cdot(1-(1-\frac{2 \varepsilon}{\pi})^{n-1})$
\hspace*{\fill} ( \ for $\varepsilon \in [0,\frac{\pi}{4}] $ \ )
\\
$\geq (1-(1-\frac{2 \varepsilon}{\pi})^{n-1})^2
\ =: \ G(\varepsilon)$
\\
The smaller function $G$ was introduced in order to simplify the terms.
For the minimized $\varepsilon$, the derivatives then give us the pdfs (probability density functions) $f$ and $g$ : \\
$f(\varepsilon) \ := \ \frac{\partial}{\partial \varepsilon} 
F(\varepsilon)
$\\
$g(\varepsilon) \ := \ \frac{\partial}{\partial \varepsilon} 
G(\varepsilon)
$\\
Since all of these functions are monotone within our range of interest, and we know that $G$ is smaller than $F$, because $G$ increases slower than $F$, it follows that the pdf $g$ is more spread out than the pdf $f$ and therefore its expectation value is larger as well as its variance is larger, i.e.:\\
$\EX_f(\varepsilon) \ < \ \EX_g(\varepsilon)$ \ and \ $\Var_f(\varepsilon) \ < \ \Var_g(\varepsilon) $
\\
We now derive an upper bound for the expectation value:
\\
$\EX(\varepsilon) 
\ = \ \EX_f(\varepsilon) 
\ < \ \EX_g(\varepsilon) 
\ = \ 
\int_0^{\frac{\pi}{4}} \
\varepsilon
g(\varepsilon) \ d\varepsilon$
$\ = \ 
\int_0^{\frac{\pi}{4}} \
\varepsilon \frac{\partial}{\partial \varepsilon} 
G(\varepsilon) \ d\varepsilon 
\\
\ = \ 
\int_0^{\frac{\pi}{4}} \
\varepsilon \frac{\partial}{\partial \varepsilon} 
(1-(1-\frac{2 \varepsilon}{\pi})^{n-1})^2 \ d\varepsilon
\\ \ = \ 
\int_0^{\frac{\pi}{4}} \ 
\frac{4 \varepsilon}{\pi} 
(n-1)(1-\frac{2\varepsilon}{\pi})^{n-2}(1-(1-\frac{2\varepsilon}{\pi})^{n-1})
\ d\varepsilon$
\\
$\ \leq \ 
\int_0^{\frac{\pi}{4}} \
\frac{4 \varepsilon}{\pi} 
(n-1)(1-\frac{2\varepsilon}{\pi})^{n-2}
\ d\varepsilon$
\\
$\ < \ 
\int_0^{\frac{\pi}{2}} \
\frac{4 \varepsilon}{\pi} 
(n-1)(1-\frac{2\varepsilon}{\pi})^{n-2}
\ d\varepsilon$
\\
$\ = \ \frac{\pi}{n}$
\\
Next we calculate the variance:
\\
$\Var(\varepsilon) 
\ = \ \Var_f(\varepsilon) 
\ < \ \Var_g(\varepsilon) 
\ = \ \int_0^{\frac{\pi}{4}} \
(\varepsilon-\EX_g(\varepsilon))^2 g(\varepsilon) \ d\varepsilon$
\\
$< \ \int_0^{\frac{\pi}{4}} \
\varepsilon^2 g(\varepsilon) \ d\varepsilon$
\\
$< \ 
\int_0^{\frac{\pi}{2}} \
\frac{4 \varepsilon^2}{\pi} 
(n-1)(1-\frac{2\varepsilon}{\pi})^{n-2}
\ d\varepsilon$
\hspace*{\fill} ( \ By analogous steps to earlier. \ )
\\
$= \ \frac{\pi^2}{n^2+n}
\ < \ \frac{\pi^2}{n^2}$
\\
So to summarize:\\ $\EX(\varepsilon) \ < \ \frac{\pi}{n}$
 \ and \ 
$\Var(\varepsilon) \ < \ \frac{\pi^2}{n^2}$
\\
We can now insert these two values into our earlier relative error formula:
\\
\textit{expected relative error} 
$\ \ \leq \ \ 
\EX(\varepsilon)^2 + \Var(\varepsilon) + \frac{6n}{|\Vec{U}-\Vec{V}|}
\ \ < \ \
\frac{2 \pi^2}{n^2} + \frac{6n}{|\Vec{U}-\Vec{V}|}$
\\
\textbf{q.e.d.}
\end{proof}
{\ }\\
\\
From \mbox{Theorem \ref{euclid}}, the following few corollaries are easily obtained. These corollaries are kept more general than the theorem, such that they also apply to similar graphs that are mentioned in the following \mbox{Subsection \ref{alternatives}}.

Firstly, it is worth pointing out that the approximation of the Euclidean distance is perfect in the limit, when stated as follows:
\begin{corollary}
$\ \lim_{n \rightarrow \infty}$
$(\ \lim_{|\Vec{U}-\Vec{V}| \rightarrow \infty}
\textit{\ expected relative error}\ ) \ \ = \ 0$\\
\end{corollary}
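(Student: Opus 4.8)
The plan is to read off the bound from \mbox{Theorem \ref{euclid}} and take the two limits in the prescribed order, using a squeeze argument. Recall that for unit distance graphs the geodesic distance always dominates the Euclidean distance, so $d(U,V) - |\Vec{U}-\Vec{V}| \geq 0$; hence the expected relative error is nonnegative for every $n$ and every pair $U,V$. This gives the lower side of the squeeze for free, and it is the reason the corollary can be stated without absolute values around the limit.

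For the upper side, I would invoke \mbox{Theorem \ref{euclid}} directly: for all sufficiently large $n$ and randomly selected $U,V$,
\[
0 \ \leq \ \textit{expected relative error} \ < \ \frac{2\pi^2}{n^2} + \frac{6n}{|\Vec{U}-\Vec{V}|}.
\]
First hold $n$ fixed (and large enough for the theorem to apply) and let $|\Vec{U}-\Vec{V}| \to \infty$. The second summand $\tfrac{6n}{|\Vec{U}-\Vec{V}|}$ vanishes in this limit while the first summand does not depend on $|\Vec{U}-\Vec{V}|$, so
\[
0 \ \leq \ \lim_{|\Vec{U}-\Vec{V}| \to \infty} \textit{expected relative error} \ \leq \ \frac{2\pi^2}{n^2}.
\]
Then take the outer limit $n \to \infty$: the right-hand side $\tfrac{2\pi^2}{n^2} \to 0$, and the left-hand side is identically $0$, so by the squeeze theorem the iterated limit equals $0$, which is the claim.

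There is essentially no obstacle here; the only point requiring a word of care is that the bound in \mbox{Theorem \ref{euclid}} is asserted only for sufficiently large $n$, but since the outer operation is a limit as $n \to \infty$ this restriction is harmless — we may simply discard the finitely many small values of $n$ for which the theorem was not claimed. One should also note in passing that the inner limit is taken over random $U,V$ conditioned on $|\Vec{U}-\Vec{V}|$ growing, which is exactly the regime $d(U,V) \gg n$ in which the path construction of the theorem's proof is valid, so no additional justification is needed beyond citing the theorem.
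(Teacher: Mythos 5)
Your proposal is correct and is essentially the argument the paper intends: the corollary is stated as following directly from Theorem \ref{euclid}, and taking the iterated limits of the bound $\frac{2\pi^2}{n^2}+\frac{6n}{|\Vec{U}-\Vec{V}|}$ together with the nonnegativity of the (already absolute-valued) expected relative error is exactly how it is "easily obtained." No gap; your remarks about discarding small $n$ and the regime $d \gg n$ are consistent with the paper's setup.
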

We can use the Bachmann–Landau notation to characterize the limiting behavior of the deviation:
\begin{corollary}
$\textit{\ expected relative error}
\ \ = \ \ \mathcal{O}( \ \frac{1}{n^2}+\frac{n}{distance} \ )$\\
\end{corollary}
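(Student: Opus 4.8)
The plan is to obtain this corollary as an immediate consequence of the inequality established in Theorem \ref{euclid}. First I would recall that that theorem supplies, for all sufficiently large $n$ and for randomly selected vertices $U,V$, the explicit bound
\[
\textit{expected relative error} \ < \ \frac{2\pi^2}{n^2} + \frac{6n}{|\Vec{U}-\Vec{V}|}.
\]
Writing $distance := |\Vec{U}-\Vec{V}|$, the right-hand side is a fixed linear combination of $\frac{1}{n^2}$ and $\frac{n}{distance}$ whose coefficients ($2\pi^2$ and $6$) are absolute constants, independent of both $n$ and $distance$.

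Next I would invoke the definition of Bachmann--Landau notation: a quantity that is nonnegative and bounded above by $C_1 \frac{1}{n^2} + C_2 \frac{n}{distance}$ for fixed constants $C_1, C_2$ is by definition $\mathcal{O}(\frac{1}{n^2} + \frac{n}{distance})$. Nonnegativity is needed to discard the absolute value sign, and it holds because $\mathcal{E}_n$ is a unit distance graph, so $d(U,V) \geq |\Vec{U}-\Vec{V}|$ and the relative error is never negative; hence the expectation of its absolute value equals the expectation of the quantity itself, which lies between $0$ and the stated upper bound. This yields the claimed asymptotic identity, understood in the joint regime $n \to \infty$ and $distance \to \infty$ with a uniform implied constant — which is exactly what the explicit bound above provides.

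Finally, I would remark that the argument uses only two features of $\mathcal{E}_n$: that it is a unit distance graph and that it satisfies the geometric estimate behind Theorem \ref{euclid}. Both features also hold for the variant graphs discussed in the following subsection, possibly with different constants in front of the two terms; since altering such constants does not affect a Bachmann--Landau statement, the corollary carries over to those graphs unchanged. I expect no genuine obstacle here: the entire content is the observation that the upper bound of Theorem \ref{euclid} already has the advertised shape, and the only point deserving a word of care is making the nonnegativity and the uniformity of the implied constant explicit.
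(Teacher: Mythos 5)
Your proposal is correct and matches the paper's intent exactly: the paper offers no separate proof, treating this corollary as an immediate reading-off of the explicit bound $\frac{2\pi^2}{n^2}+\frac{6n}{|\Vec{U}-\Vec{V}|}$ from Theorem \ref{euclid}, which is precisely your argument. Your added remarks on nonnegativity (via the unit-distance property) and on carrying the statement over to the variant graphs with different constants are sensible elaborations consistent with the paper's own comment that the corollaries are kept more general than the theorem.
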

A common related critical question is whether a square's diagonal's length will equal $\sqrt{2}$ relative to the square's side's length $k$, to which the answer is of course yes, in the limit:
\begin{corollary} \
$\lim_{n \rightarrow \infty}$
$(\ \lim_{k \rightarrow \infty}
\frac{d(U,V)}{k} \ ) 
\ \ = \ \sqrt{2}$ ,\\
where $\Vec{U} = (k,0)$ and $\Vec{V} = (0,k)$.\\
\end{corollary}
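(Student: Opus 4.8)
The plan is to obtain this corollary not from the probabilistic statement of Theorem~\ref{euclid}, but from the deterministic path bound established inside its proof, squeezed against the unit-distance lower bound. First I would record the data: with $\vec U=(k,0)$ and $\vec V=(0,k)$ we have $|\vec U-\vec V|=k\sqrt{2}$, and the segment $\overline{UV}$ points in the fixed direction $135^\circ$, independently of $k$ and of $n$. For the lower bound I would invoke that $\mathcal{E}_n$ is a unit distance graph: under the real-valued embedding supplied by the rotation matrices, every edge has Euclidean length one, so $d(U,V)\ge|\vec U-\vec V|=k\sqrt2$, hence $d(U,V)/k\ge\sqrt2$ for all $k$ and all $n$. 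In particular $\liminf_{k\to\infty} d(U,V)/k\ge\sqrt2$ for each fixed $n$.

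For the upper bound I would reuse the five-segment construction from the proof of Theorem~\ref{euclid}, which produced $d(U,V)\le \frac{|\vec U-\vec V|}{\cos\varepsilon}+6n=\frac{k\sqrt2}{\cos\varepsilon}+6n$, where $\varepsilon=\varepsilon(n)$ is the smallest angle for which two of the $n$ SLGs each have an orthogonal line direction within $\varepsilon$ of $135^\circ$ and are positioned so that the lines $l_i,l_j$ cross between $U$ and $V$ with the correct sign. Dividing by $k$ and letting $k\to\infty$ annihilates the $6n/k$ term, since $6n$ is independent of $k$; thus $\limsup_{k\to\infty} d(U,V)/k\le \sqrt2/\cos\varepsilon(n)$. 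Combined with the previous paragraph, the inner limit in $k$ (or, if one prefers not to assume it exists for fixed $n$, its $\liminf$ and $\limsup$) is trapped in the interval $[\,\sqrt2,\ \sqrt2/\cos\varepsilon(n)\,]$.

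Finally I would send $n\to\infty$. Because Plouffe's constant $\frac{\arctan(1/2)}{\pi}$ is transcendental, hence irrational, the equidistribution theorem makes the SLG orientations become dense in $[0,2\pi]$, so for the target direction $135^\circ$ one can find two suitably oriented lattices aligned arbitrarily well and crossing as required; therefore $\varepsilon(n)\to 0$ and $\sqrt2/\cos\varepsilon(n)\to\sqrt2$. The squeeze then forces $\lim_{n\to\infty}\bigl(\lim_{k\to\infty} d(U,V)/k\bigr)=\sqrt2$. The same argument, using only that the underlying graph is a unit distance graph whose lattice orientations equidistribute, covers the more general graphs of Subsection~\ref{alternatives} as well.

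The step I expect to be the main obstacle is handling the ordering of the two limits rigorously: one must be careful that the $\Theta(n)$ additive overhead in the path bound is genuinely $k$-independent so that it vanishes under the inner limit, and that the inner $\limsup/\liminf$ in $k$ can be controlled by the two-sided squeeze \emph{before} the outer limit in $n$ is taken. A secondary point to pin down is that the density argument delivers two appropriately oriented lattices $l_i,l_j$ with the correct crossing geometry rather than just one line — but this is exactly the configuration already analyzed in the proof of Theorem~\ref{euclid}, so it needs only the remark that $135^\circ$ behaves no differently from a generic direction.
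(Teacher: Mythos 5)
Your proof is correct, but it is worth noting how it relates to the paper: the paper offers no explicit argument for this corollary, merely asserting that it is ``easily obtained'' from Theorem~\ref{euclid}. That assertion is not quite immediate, because the theorem is a probabilistic statement about the \emph{expected} relative error over \emph{randomly selected} vertex pairs, whereas the corollary concerns one fixed pair $(k,0)$, $(0,k)$ and one fixed direction. Your route --- squeezing $d(U,V)/k$ between the unit-distance lower bound $d(U,V)\ge|\vec U-\vec V|=k\sqrt2$ and the deterministic interior bound $d(U,V)\le \frac{|\vec U-\vec V|}{\cos\varepsilon}+6n$ extracted from the proof of the theorem, then killing the $6n/k$ term in the inner limit and sending $\varepsilon(n)\to0$ in the outer limit via density of the orientations (which follows already from the irrationality of Plouffe's constant, since the orbit of an irrational rotation is dense) --- is exactly the argument needed to make the corollary rigorous for a fixed, non-random direction. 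Your care about the order of limits, about the $k$-independence of the additive $\Theta(n)$ overhead, and about replacing the inner limit by $\liminf$/$\limsup$ if its existence is not granted, goes beyond what the paper records and closes the only real gap in deriving a deterministic limit from a theorem stated in expectation.
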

Lastly, we can also formulate a corollary that is void of any coordinates:
\begin{corollary}\label{voidOfCoordinates}
For $d \gg n \gg 1$:\\
For any set $S$ of vertices of $\mathcal{E}_n$, where all the geodesic distances between these vertices are larger than a constant $d$ : 
There exists a set of points in the Euclidean plane, such that the set of ratios between the Euclidean distances between these points is identical to the set of ratios between the geodesic distances between the vertices in $S$.\\
\end{corollary}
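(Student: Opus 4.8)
The plan is to read this corollary as the coordinate-free shadow of Theorem \ref{euclid}: the unit-distance embedding constructed in Subsection \ref{shortest} already supplies the candidate Euclidean points, and the only thing left to verify is that it transports the geodesic distance \emph{ratios} faithfully in the regime $d \gg n \gg 1$.

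First I would fix the set $S = \{V_1, \dots, V_m\}$ and recall that each vertex $V_a$ of $\mathcal{E}_n$ carries a real coordinate vector $\Vec{V_a} \in \mathbb{R}^2$ under the unit-distance embedding, obtained by composing the per-lattice integer coordinates with the appropriate power of the rotation matrix of angle $\theta$. Setting $P_a := \Vec{V_a}$ gives a concrete point set. By the unit-distance property, $|\Vec{V_a} - \Vec{V_b}| \le d(V_a, V_b)$, and by the path construction in the proof of Theorem \ref{euclid}, $d(V_a, V_b) \le |\Vec{V_a} - \Vec{V_b}|/\cos(\varepsilon_{ab}) + 6n$, where $\varepsilon_{ab} \in [0,\theta]$ measures how well the best-aligned pair of lattices tracks the segment $\overline{V_a V_b}$. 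Since every geodesic distance inside $S$ exceeds $d$, dividing through shows that each ratio $d(V_a,V_b)/d(V_c,V_d)$ differs from the corresponding ratio $|\Vec{V_a}-\Vec{V_b}|/|\Vec{V_c}-\Vec{V_d}|$ by an amount controlled by the $\varepsilon$'s and by $n/d$. I would then let $n\to\infty$ and $d\to\infty$: by the equidistribution argument already used, the relevant $\varepsilon_{ab}$ become uniformly small, so the finite tuple of Euclidean ratios of the $P_a$ converges to the finite tuple of geodesic ratios, and in the limit the two sets of ratios coincide, which is the assertion.

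The step I expect to be the main obstacle is precisely the passage from ``approximately equal ratios'' to ``identical ratios'' for a fixed finite configuration, together with the fact that Theorem \ref{euclid} only controls the \emph{expected} relative error over randomly chosen vertices rather than a bound valid for every pair. For the first point I would argue in the limit (equivalently, present the statement as $\lim_{n\to\infty}\lim_{d\to\infty}$), since a finite matrix of geodesic distances need not be exactly a Euclidean distance matrix once $m \ge 4$, and hence cannot be matched on the nose at finite $n$; it is the limiting configuration that is genuinely Euclidean. For the second point I would either restrict $S$ to generic (``randomly selected'') vertices so that the expectation bound applies pair by pair with the stated error, or re-run the path construction of Theorem \ref{euclid} directly for the $O(m^2)$ specific segments determined by $S$, using that for large $n$ the equidistributed lattice orientations leave no direction without a sufficiently well-aligned lattice. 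Once a uniform-in-$S$ bound of the form $(d(V_a,V_b) - |\Vec{V_a}-\Vec{V_b}|)/|\Vec{V_a}-\Vec{V_b}| = o(1)$ is in hand, the convergence of the ratio-sets, and with it the corollary, follows immediately.
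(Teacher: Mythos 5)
Your proposal is correct and takes essentially the route the paper intends: the paper offers no explicit proof of this corollary, merely asserting that it is ``easily obtained'' from Theorem \ref{euclid}, and the natural derivation is exactly yours --- take the unit-distance embedding as the candidate point set, sandwich each geodesic distance between the Euclidean lower bound and the $\frac{|\Vec{U}-\Vec{V}|}{\cos(\varepsilon)}+6n$ upper bound, and pass to the regime $d \gg n \gg 1$ so the distance ratios converge. Your two flagged obstacles (that ``identical'' can only be meant in the limit, since an integer geodesic distance matrix on four or more vertices is generically not planar-realizable on the nose, and that Theorem \ref{euclid} bounds only an expectation rather than every pair) are genuine looseness in the paper's statement that your reading resolves in the only sensible way.
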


\subsection{Variations, Generalisations, and Alternatives}\label{alternatives}

Some alternatives to the aforedescribed graph $\mathcal{E}_n$ that are variations of the same concept, are briefly discussed here in order to provide a more generalized picture.

\subsubsection{alternative angles}
The angle $\theta$ was determined by the pair of numbers $(2,1)$ of steps taken in different directions within Rule 1. This pair of integers was chosen for its simplicity but could otherwise have been chosen arbitrarily, as long as the two integers were not equal to each other nor equal to zero; our corollaries would still be holding then. 
This is because our proof of the \mbox{Theorem \ref{euclid}} is based on the irrationality of Plouffe's constant $\frac{\arctan(\frac{1}{2})}{\pi} \not\in \mathbb{Q}$ , where the ratio $\frac{1}{2}$ appears, that can be generalized to other ratios $q$ , i.e.:
$\frac{\arctan(q)}{\pi} \not\in \mathbb{Q}$ , where $q \in \mathbb{Q}$ and $q \not\in \{-1,0,1\}$,
as proven by Smith \cite{smith2003pythagorean}. Hence any angle of the form $\theta = 2\arctan(q)$ would be admissible with SLGs.

\subsubsection{Alternative lattice graphs}
While we only employed square lattice graphs, other obvious choices are the hexagonal as well as the triangular lattice graphs. 
Our corollaries also apply when triangular lattice graphs are used instead of the SLG, since both allow for straight line paths, whereas for the hexagonal lattice, the geodesic distance would have to be multiplied with a correction factor of $\frac{2}{\sqrt{3}}$, in order to account for the absence of straight line paths. 
A further possibility is to use square lattice graphs, but where the rules are altered such that the SLGs are interlaced in such a way that each square represents a rhombus rather than a square, while still maintaining the unit distance graph property, as well as our corollaries.

\subsubsection{Non-unit distance graphs}
All graphs that we discussed so far were unit distance graphs.
This unit distance property is however unnecessary for a scaled version of \mbox{Theorem \ref{euclid}} to hold. 
Examples of such non-unit distance graphs can be obtained as variations of $\mathcal{E}_n$ by cancelling Rule 2 and replacing it with a simple rule that lets the two SLGs share more vertices with each other. 
Each of these additional shared vertices has to correspond to a pair of close-by vertices in $\mathcal{E}_n$.
Geodesic distances then become shorter than the Euclidean distance, while a lower bound, proportional to the Euclidean distance, remains, and thus, an accordingly scaled version of \mbox{Theorem \ref{euclid}} persists.

\newpage

\section{The Emergent Minkowski Spacetime}\label{minkowski}

In this section we construct an example of a GRIDS, which is our acronym for a directed \textbf{G}raph that is \textbf{R}elativistic, \textbf{I}sotropic, \textbf{D}eterministic, and \textbf{S}imple (GRIDS). 
Our example of a GRIDS is fully characterized by simple rules describing its local network structure, and yet, at the large scale it does yield a complete approximation of the continuous (3+1)-dimensional Minkowski spacetime\cite{minkowski1909raum} of special relativity theory\cite{einstein1905elektrodynamik} including the continuous hyperbolic space of the Lorentz group \cite{poincare1906dynamics,lorentz1904electromagnetic}. 
Outside the scope of this paper are other GRIDS, characterized by even fewer rules, that might be more difficult to conceive of.
Our example, $\mathcal{M}_n$, serves as a proof of concept for GRIDS. 
It features both light-like edges and time-like edges and is constructed with an emphasis on ease of coordinatization and ease of understanding.
\\
\subsection{Single Frame-Grid}\label{framegrid}
We firstly introduce the concept of a \textit{frame-grid}, which is a repeatedly occurring subgraph of a GRIDS. 
A frame-grid is a lattice graph, that corresponds to a single inertial frame of reference of the emergent Minkowski spacetime of a GRIDS. 
Simple repetitive rules govern how frame-grids are interlaced with each other in order to form a GRIDS. 

After this broad informal definition of a frame-grid, we now proceed to describing a specific frame-grid-example, $\mathscr{F}$, that we will use to construct our GRIDS-example, $\mathcal{M}_n$, in the following subsections.
$\mathscr{F}$ is an infinite directed graph and is also a four-dimensional lattice graph.
Note that $\mathscr{F}$ is \textit{not} the vertex-edge graph of a hypercubic honeycomb. 
$\mathscr{F}$ can easily be understood when integer coordinates $(x,y,z,t) \in \mathbb{Z}^4$ are assigned to each of its vertices. 
The set of vertices of $\mathscr{F}$ corresponds to the subset of $\mathbb{Z}^4$ where the sum $x+y+z+t$ is an even number, 
i.e.: The set
$\{(x,y,z,t) \in \mathbb{Z}^4 \ | \ x+y+z+t \equiv 0 \ ( mod \ 2 ) \}$.

Let us now define the \textit{light-like edges} of $\mathscr{F}$. 
Each vertex is the origin of six light-like edges directed away from it, leading in six different directions, to six other vertices.
Following such a directed edge always leads to a vertex where $t$ is increased by one, while exactly one of the three other coordinates $x,y,z$ is changed; 
it can change by $\pm 1$, which are two possible values, and hence the \textit{six} directions. 
Therefore each vertex also has six light-like edges directed at it, originating from six other vertices. \\
Let these six directions be labeled $\textsc{x}_+$ , $\textsc{x}_-$ , $\textsc{y}_+$ , $\textsc{y}_-$ , $\textsc{z}_+$ , and $\textsc{z}_-$ .

We now also define the \textit{time-like edges} of $\mathscr{F}$. 
Each vertex is the origin of one time-like edge directed away from it, leading to a vertex, where $t$ is increased by 2, while the other coordinates, $x,y,z$, stay unchanged. 
Therefore each vertex also has one time-like edge directed at it, that originates from another vertex.

Due to the many regularities, we take it as a given, that such lattice graphs $\mathscr{F}$ can be constructed without coordinates, solely through simple graph-rewriting rules, that we will not bother describing in this paper. 
These rules can easily be made to tag all light-like edges with their corresponding direction labels.
In the following subsection, we will be using these direction labels to denote steps from one vertex to another, along single light-like edges. 
See the following four examples of our notation of steps along single light-like edges and their associated movements in coordinates $(x,y,z,t)$:
\\
$+\textsc{x}_+ \ \widehat{=} \ (+1,0,0,+1)$
\\
$-\textsc{x}_+ \ \widehat{=} \ (-1,0,0,-1)$
\\
$+\textsc{x}_- \ \widehat{=} \ (-1,0,0,+1)$
\\
$-\textsc{x}_- \ \widehat{=} \ (+1,0,0,-1)$
\\
\\
\subsection{Interlaced Pair of Frame-Grids}

In this subsection, we describe how two frame-grids, $\mathscr{F}$ and $\mathscr{F}'$, are interlaced with each other in order to form $\mathcal{G}$, a graph that is a helpful intermediate step before understanding the GRIDS $\mathcal{M}_n$ .
Both frame-grids, $\mathscr{F}$ and $\mathscr{F}'$ , are subgraphs of $\mathcal{G}$, which represents an 'elementary' Lorentz transformation between their two inertial frames of reference. 
$\mathcal{G}$ is also a repeatedly occurring subgraph of $\mathcal{M}_n$.
\\
\subsubsection{Primitive Local Rules}\label{4Rules}
$\mathscr{F}$ and $\mathscr{F}'$ share vertices with each other, i.e.: 
There are some vertices that are both part of $\mathscr{F}$ as well as part of $\mathscr{F}'$; let these be called '\textit{shared vertices}'.
These shared vertices are arranged in a regular fashion, as characterized by the following rules.
We denote a step along a light-like edge of $\mathscr{F}$ as described in the previous subsection, and we denote a step along a light-like edge of $\mathscr{F'}$ identically, but with a stroke.\\
\\
\textbf{Rule 1:} 
$\mathscr{F}$ and $\mathscr{F}'$ share a vertex $O$ . Their other vertices are not shared unless required by the following rules.\\
\\
\textbf{Rule 2:} 
For all shared vertices $A$ :
\[ A + \textsc{x}_+ + \textsc{x}_+ = A + \textsc{x}_+' \]
\[ A + \textsc{x}_- = A + \textsc{x}_-' + \textsc{x}_-' \]
\\
\textit{Explanation}: From the strokes it can be seen that the left sides of the equations denote paths through $\mathscr{F}$ , while the right sides denote paths through $\mathscr{F'}$.
Therefore, for instance, if $A$ is a shared vertex, then $( A + \textsc{x}_+ + \textsc{x}_+ )$ is also a shared vertex. 
By successive application of Rule 2 , the graph depicted in \mbox{Figure \ref{fig:2F}} is obtained:
\\
\\
\begin{figure}[H]
    \centering
    \includegraphics[scale=0.45]{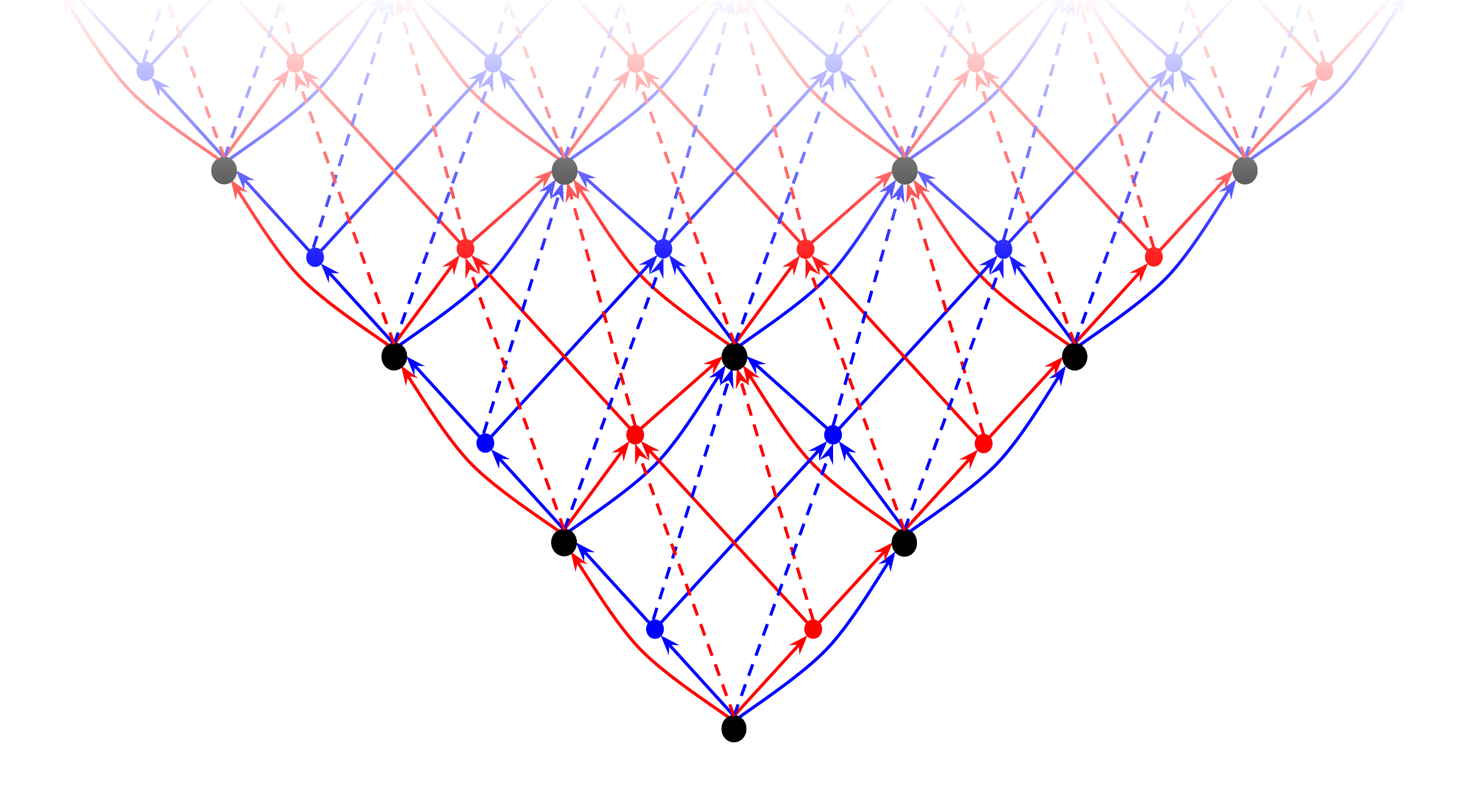}
    \caption{Part of the graph formed by the repeated application of Rule 2 .\\
    $\mathscr{F}$ is shown in red, $\mathscr{F'}$ is shown in blue, and their shared vertices are shown in black. 
    The solid arrows represent light-like edges while the dashed arrows represent time-like edges.    }
    \label{fig:2F}
\end{figure}
So far we interlaced two 2-dimensional subgraphs of $\mathscr{F}$ and $\mathscr{F}'$ to form the graph, depicted in \mbox{Figure \ref{fig:2F}} . 
Combined with the following rule, Rule 3, the interlacing is extended to four dimensions: \\
\\
\textbf{Rule 3:} 
For all shared vertices $A$ :
\[ A + \textsc{y}_+ - \textsc{z}_+ = A + \textsc{y}_+' - \textsc{z}_+' \]
\[ A + \textsc{y}_- - \textsc{z}_+ = A + \textsc{y}_-' - \textsc{z}_+' \]
\\
\textit{Explanation}: From the strokes it can again be seen that the left sides of the equations denote paths through $\mathscr{F}$ , while the right sides denote paths through $\mathscr{F'}$ . 
Note that the paths in Rule 3 move once forwards and once backwards in time, thus not changing the position in time overall. This rule simply copies the interlacing of the subgraph seen in \mbox{Figure \ref{fig:2F}} onto the many parallel subgraphs that have different $y$ and $z$ positions.

The repeated application of Rule 2 propagates only forwards in time. 
Analogously, Rule 3 does not propagate in all directions. 
We therefore add the following final rule, in order to propagate the interlacing into all times and all directions, for completeness sake:\\
\\
\textbf{Rule 4:} 
The previous rules are also apply with all of their paths reversed.\\
\\
\subsubsection{Reformulation through Coordinates}
While our set of rules was based on individual steps along edges, we now reformulate this set of rules using integer coordinates $(x,y,z,t) \in \mathbb{Z}^4$ that can be assigned to the vertices of $\mathscr{F}$ as we described in \mbox{Subsection \ref{framegrid}}. \\
Let $(x',y',z',t') \in \mathbb{Z}^4$ be the integer coordinates assigned on $\mathscr{F}'$ .

These eight coordinates belong to the same vertex if and only if the following four equations hold:
\[t + x \ = \ 2 \ ( t' + x' )\]
\[2 \ ( t - x ) \ = \ t' - x'\]
\[y = y'\]
\[z = z'\]
Note that these equations are sufficient to fully replace our four primitive rules, i.e.: These equations already fully describe $\mathcal{G}_2$ .

We further translate these equations into linear algebra. 
Let $\Vec{A}$ and $\Vec{A}'$ be the coordinates on $\mathscr{F}$ and $\mathscr{F}'$ , respectively, in the form of column vectors. The following linear equation is then equivalent to the previous four equations:
\[\begin{bmatrix}
1 & 0 & 0 & 1\\
-2 & 0 & 0 & 2\\
0 & 1 & 0 & 0\\
0 & 0 & 1 & 0\\
\end{bmatrix}
\Vec{A} \ \
= \ \
\begin{bmatrix}
2 & 0 & 0 & 2\\
-1 & 0 & 0 & 1\\
0 & 1 & 0 & 0\\
0 & 0 & 1 & 0\\
\end{bmatrix}
\Vec{A}' \ \
\]
This linear equation can then be rewritten equivalently as follows:
\\
\[
\Vec{A} \ \
= \ \ \frac{1}{4} \
\begin{bmatrix}
5 & 0 & 0 & 3\\
0 & 4 & 0 & 0\\
0 & 0 & 4 & 0\\
3 & 0 & 0 & 5\\
\end{bmatrix}
\Vec{A}'
\]
\[
 = \ \
\begin{bmatrix}
\cosh ( \ln (2) ) & 0 & 0 & \sinh(\ln(2))\\
0 & 1 & 0 & 0\\
0 & 0 & 1 & 0\\
\sinh(\ln(2)) & 0 & 0 & \cosh(\ln(2))\\
\end{bmatrix}
\Vec{A}' \ \
\]
\\
We can immediately see that this matrix represents a Lorentz transformation without rotation, thus called a Lorentz boost, with the following values:
\\
Velocity : \ $v = \frac{3}{5}c$ \ ;\ \ \ \
Rapidity : \ $w = \ln(2)$ \ ;\ \ \ \
Lorentz factor : \ $\gamma = \frac{5}{4}$ \ .
\\
The relation between these three physical quantities is the following:
\[ \artanh(v/c) = w = \arcosh(\gamma) \]
\\
\subsection{Multitudinous Interlaced Frame-Grids}
We now describe the construction of the GRIDS example $\mathcal{M}_n$ and then go on to calculate the accuracy of both, the speed of light as well as the proper time interval.
The construction is most easily shown visually by using the conformal disk model\footnote{It is also known as \textit{Poincaré disk model}, although originally discovered by Beltrami \cite{beltrami1868teoria} .} of the hyperbolic plane, to represent the relative rapidities and the angles between Lorentz boosts. 
This hyperbolic space later emerges naturally from many successive Lorentz boosts due to repeated interlacing.
In the previous subsection, we described how a frame-grid can be interlaced with another frame-grid, resulting in a Lorentz boost in the direction of dimension $x$, now depicted in the first disk of \mbox{Figure \ref{fig:nine disks}.} 
The second disk shows a frame-grid in the center that is interlaced analogously with other frame-grids, but in different perpendicular directions. 
These boosts in different perpendicular directions can easily be achieved by permuting the directions within rule 2 and rule 3 of the previous subsection accordingly. 
Note that these disks are only 2D cross sections of the 3D Poincaré ball model, where there are six such perpendicular directions.
Let $\mathcal{M}_1$ be the graph consisting of a central frame-grid that is interlaced with frame-grids in all six of these perpendicular directions, totalling a number of seven frame-grids.
When there is a sequence of interlaced frame-grids, we can assign a different Lorentz transformation to each frame-grid through the corresponding successive applications of the previously described coordinate transformation. 
Note that this graph has a special property that resembles the unit-distance graph property, i.e.: 
The graph can be embedded in a Minkowski space-time such that all the time-like edges correspond to time-like paths of the same unit time, while all the light-like edges will correspond to light-like paths. 
This property is automatically retained by $\mathcal{M}_n$ for all $n \in \mathbb{N}$.
Let $\mathcal{M}_2$ then be the graph consisting of $\mathcal{M}_1$, where all frame-grids are interlaced with further frame-grids in all unoccupied perpendicular directions, resulting in the 2D cross section depicted in the third disk and totalling a number of 37 frame-grids.
Note that, since this is a hyperbolic space, no square was formed, even-though the angles are perpendicular and all lines are of the same length as well as straight.
In the third disk we can furthermore start to observe the \textit{Wigner rotations} \cite{thomas1926motion,wigner1939unitary} caused by successive Lorentz boosts in different directions.
The Wigner rotation is a consequence of special relativity that is similarly astonishing to the \mbox{twin paradox}.
While the twin paradox concerns the time difference caused by successive Lorentz-boosts, the Wigner rotation concerns the change in orientation caused by successive Lorentz-boosts.
We obtain $\mathcal{M}_3$ by repeating the same procedure and so forth; this is also how we define $\mathcal{M}_n$ recursively for all $n \in \mathbb{N}$. 
The remaining disks visualize a few more of these steps.
\\
\begin{samepage}
\begin{figure}[H]
\captionsetup[subfigure]{labelformat=empty}
     \centering
     \begin{subfigure}[b]{0.3\textwidth}
         \centering
         \includegraphics[scale=0.34]{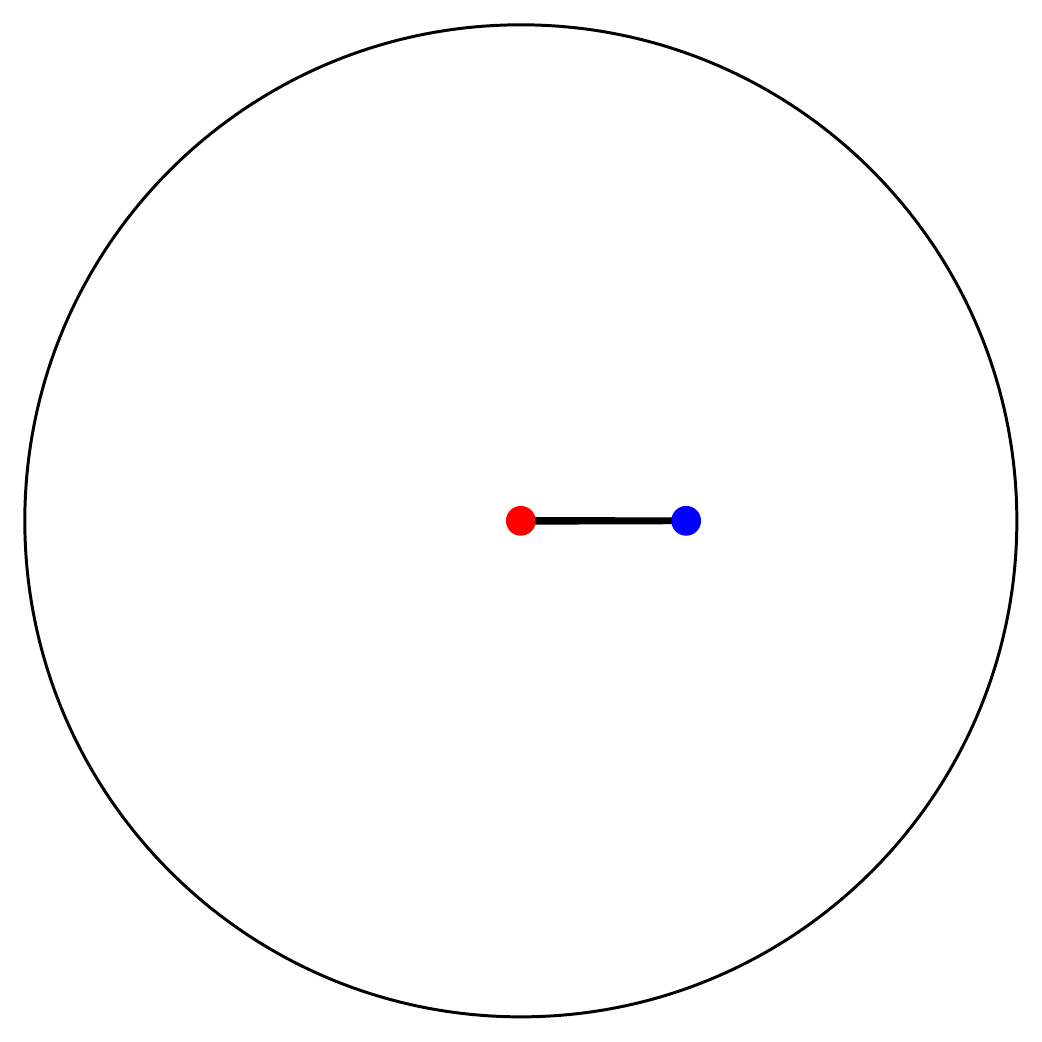}
         \caption{$\mathcal{G}$: $\mathcal{F}$ in red, $\mathcal{F}'$ in blue.}
     \end{subfigure}
     \hfill
     \begin{subfigure}[b]{0.3\textwidth}
         \centering
         \includegraphics[scale=0.17]{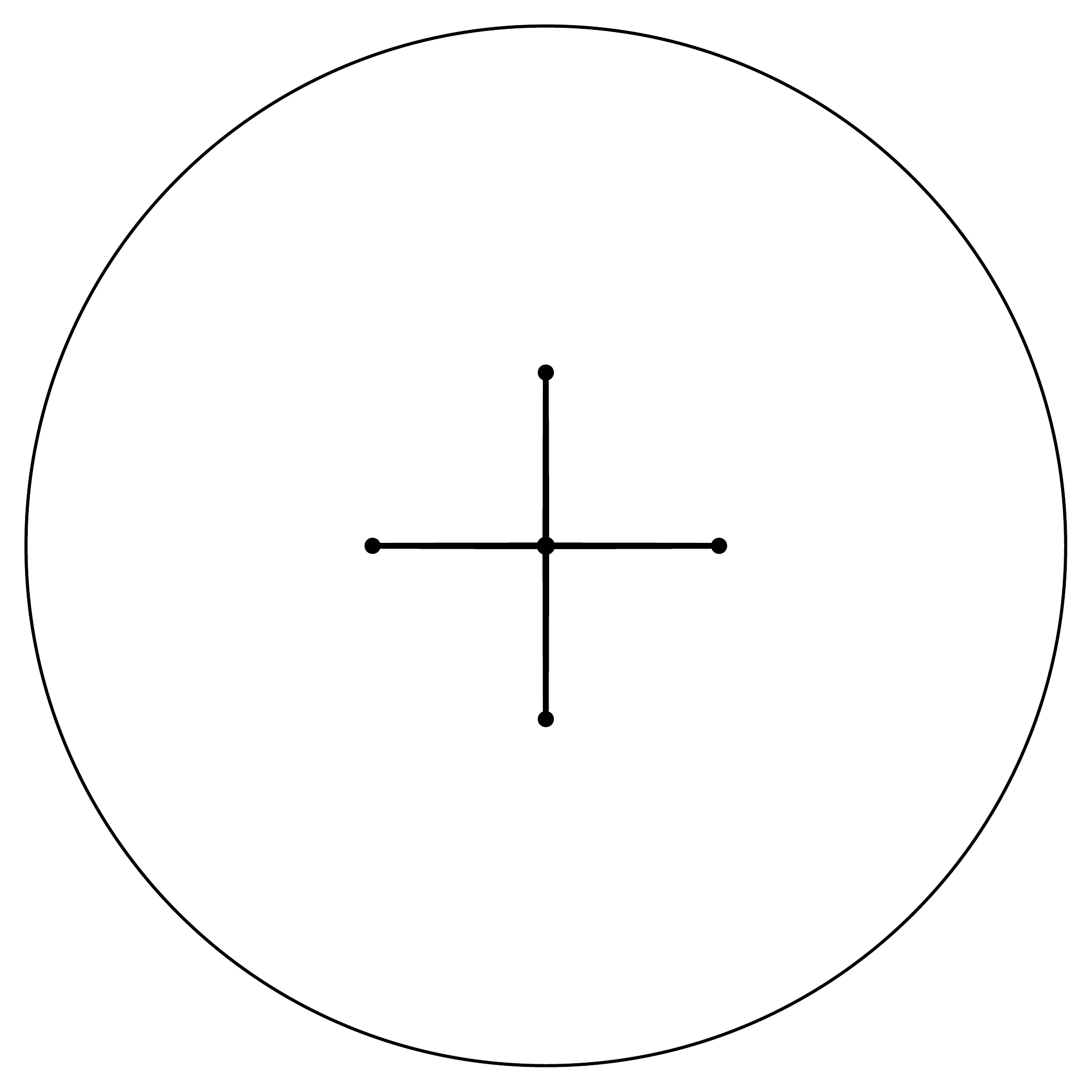}
         \caption{$\mathcal{M}_1$}
     \end{subfigure}
     \hfill
     \begin{subfigure}[b]{0.3\textwidth}
         \centering
         \includegraphics[scale=0.17]{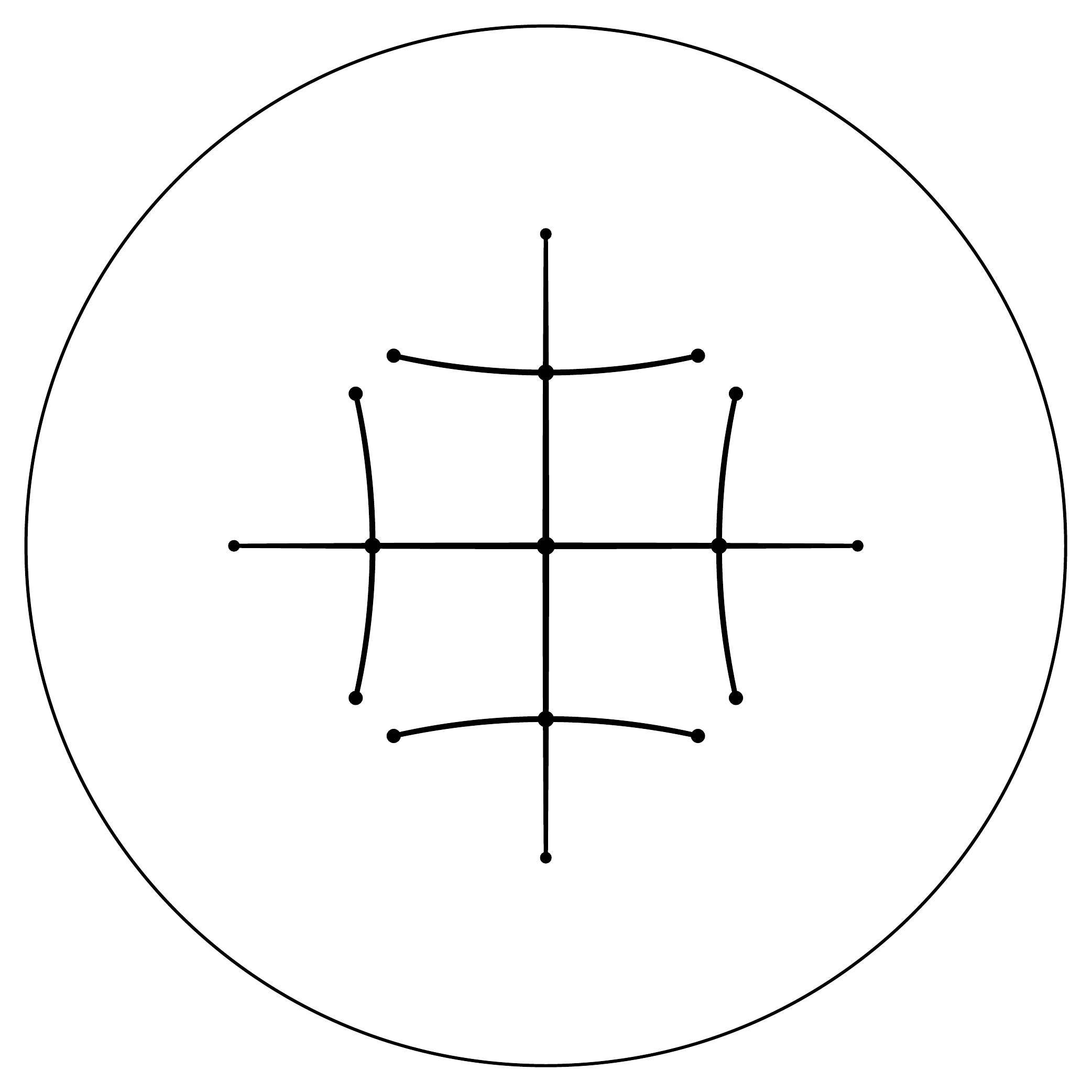}
         \caption{$\mathcal{M}_2$}
     \end{subfigure} 
     \par\bigskip
     \begin{subfigure}[b]{0.3\textwidth}
         \centering
         \includegraphics[scale=0.17]{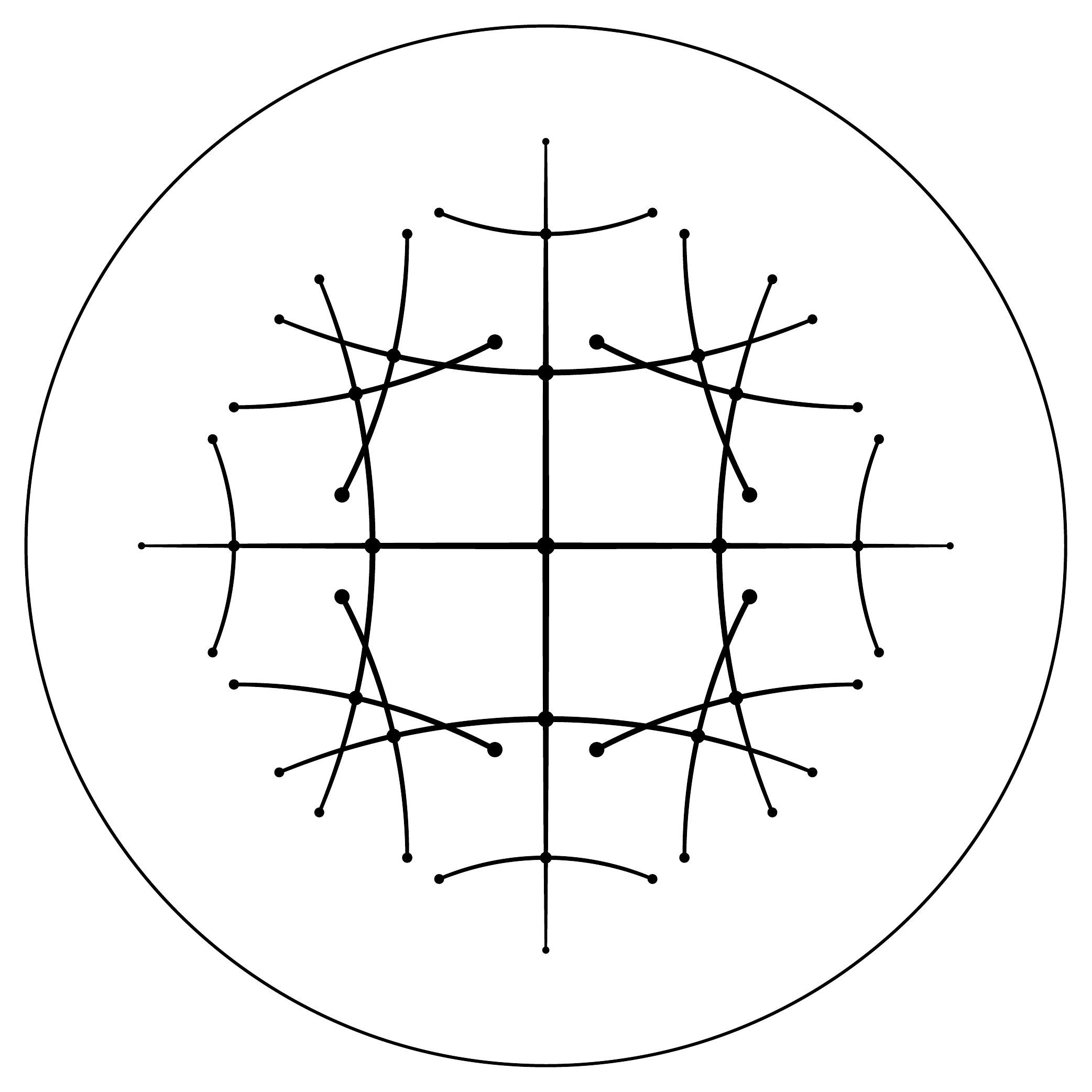}
         \caption{$\mathcal{M}_3$}
     \end{subfigure}
     \hfill
     \begin{subfigure}[b]{0.3\textwidth}
         \centering
         \includegraphics[scale=0.17]{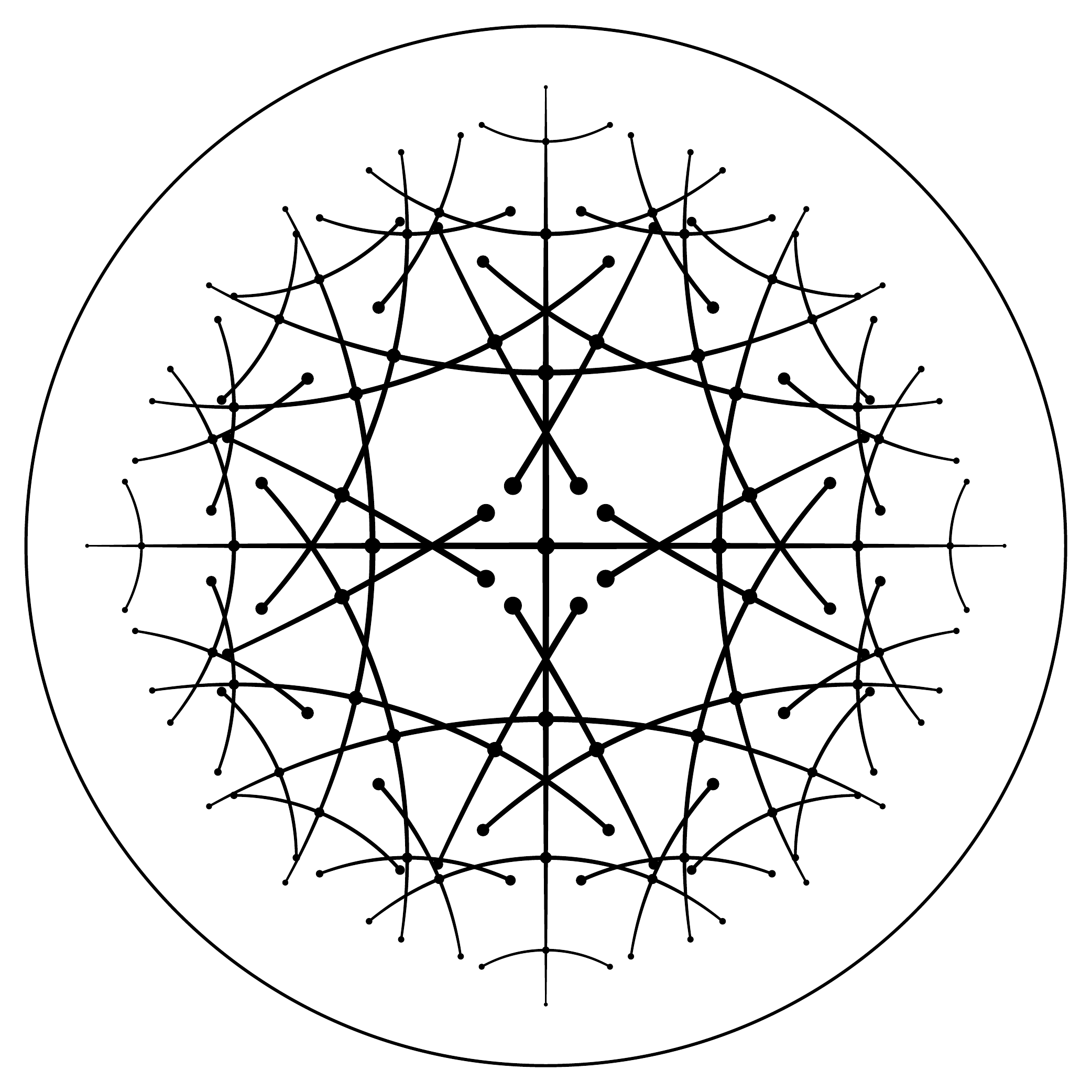}
         \caption{$\mathcal{M}_4$}
     \end{subfigure}
     \hfill
     \begin{subfigure}[b]{0.3\textwidth}
         \centering
         \includegraphics[scale=0.17]{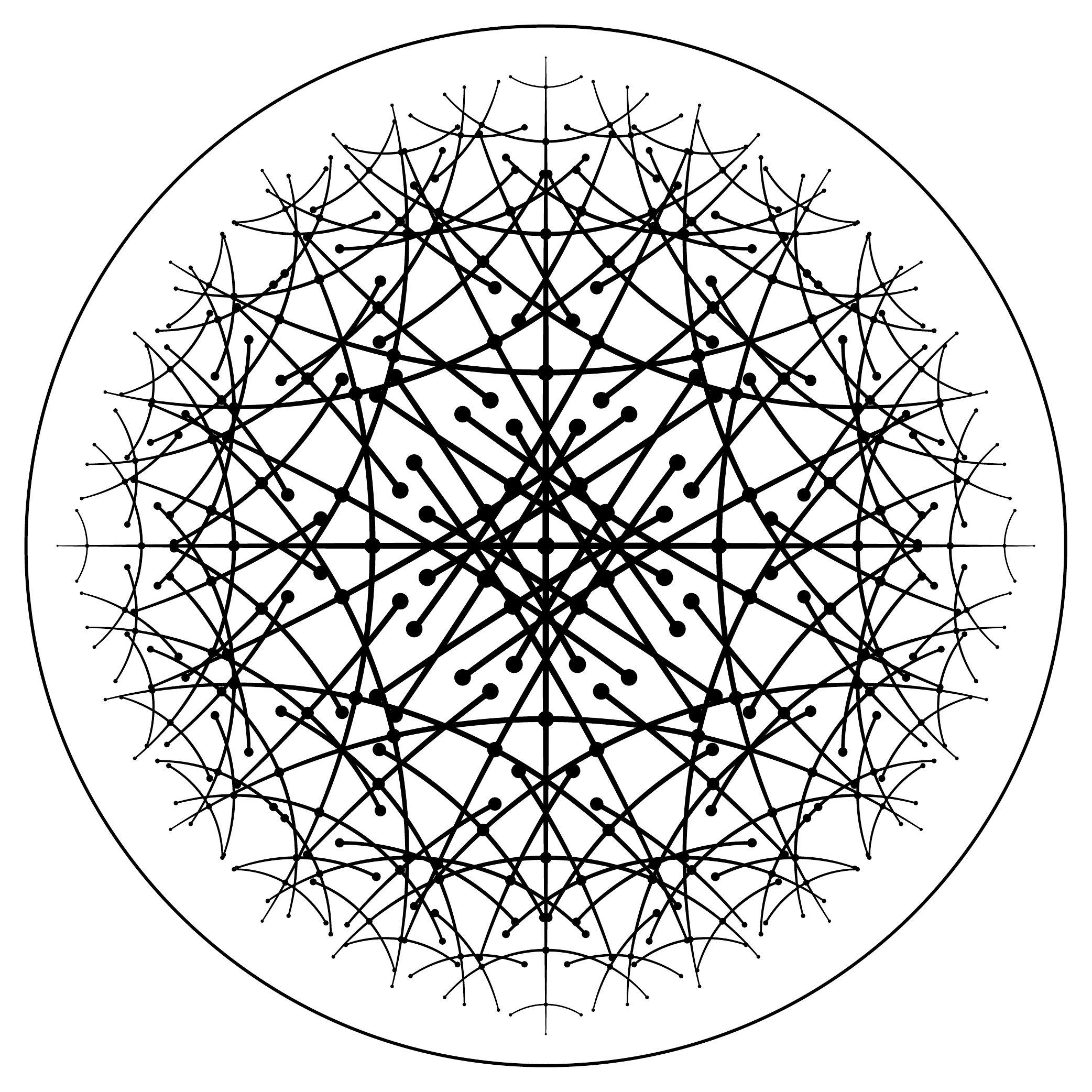}
         \caption{$\mathcal{M}_5$}
     \end{subfigure}
     \par\bigskip
     \begin{subfigure}[b]{0.3\textwidth}
         \centering
         \includegraphics[scale=0.17]{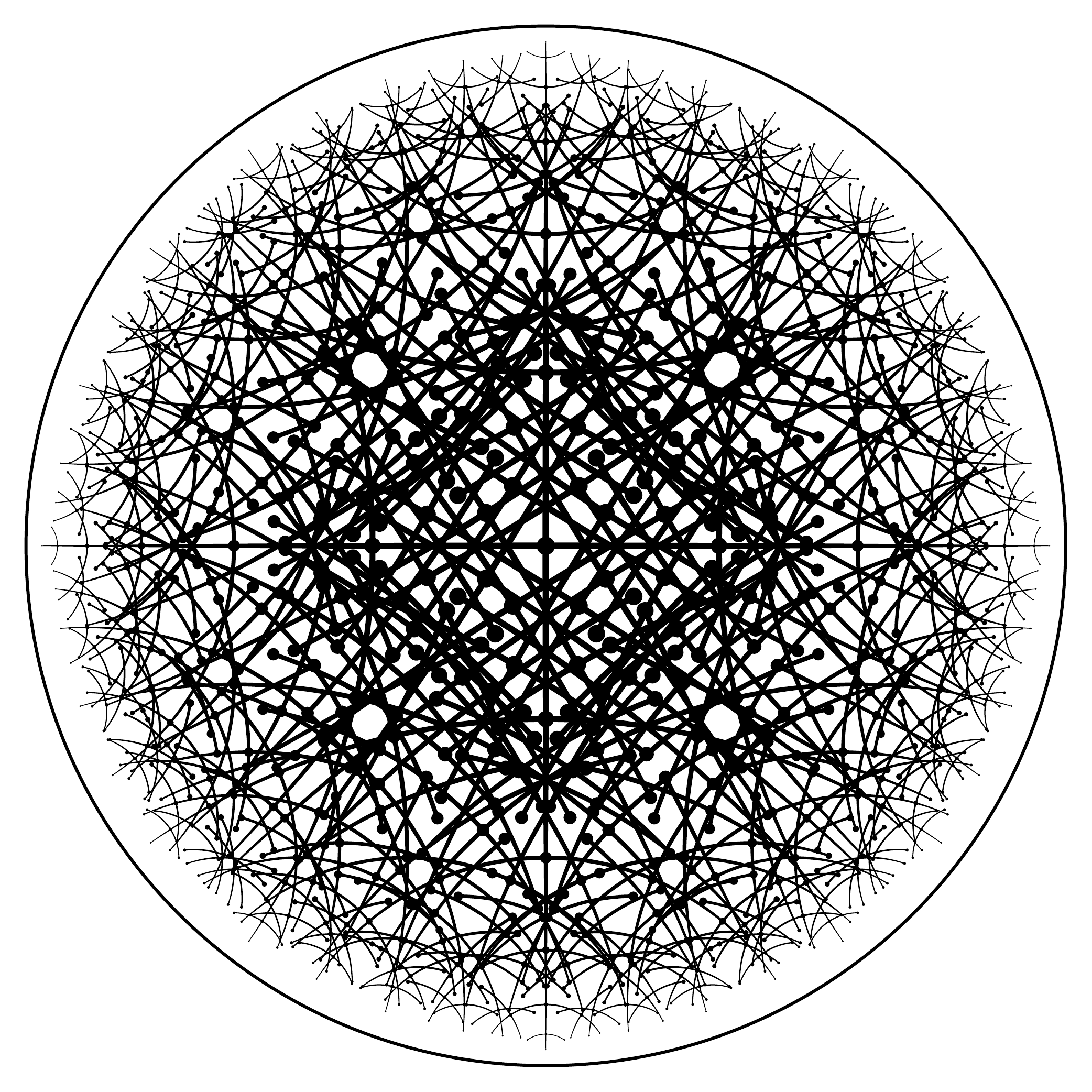}
         \caption{$\mathcal{M}_6$}
     \end{subfigure}
     \hfill
     \begin{subfigure}[b]{0.3\textwidth}
         \centering
         \includegraphics[scale=0.17]{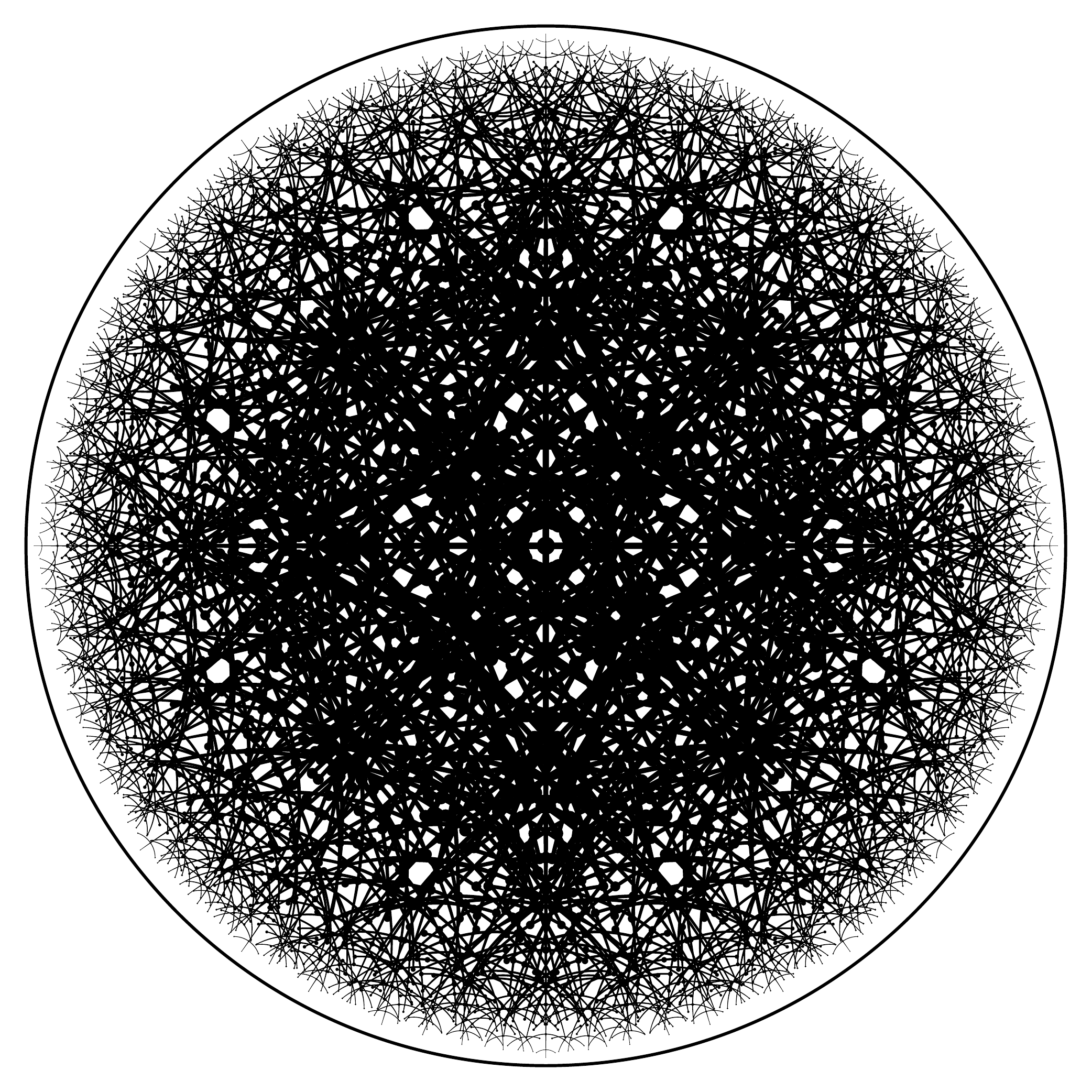}
         \caption{$\mathcal{M}_7$}
     \end{subfigure}
     \hfill
     \begin{subfigure}[b]{0.3\textwidth}
         \centering
         \includegraphics[scale=0.34]{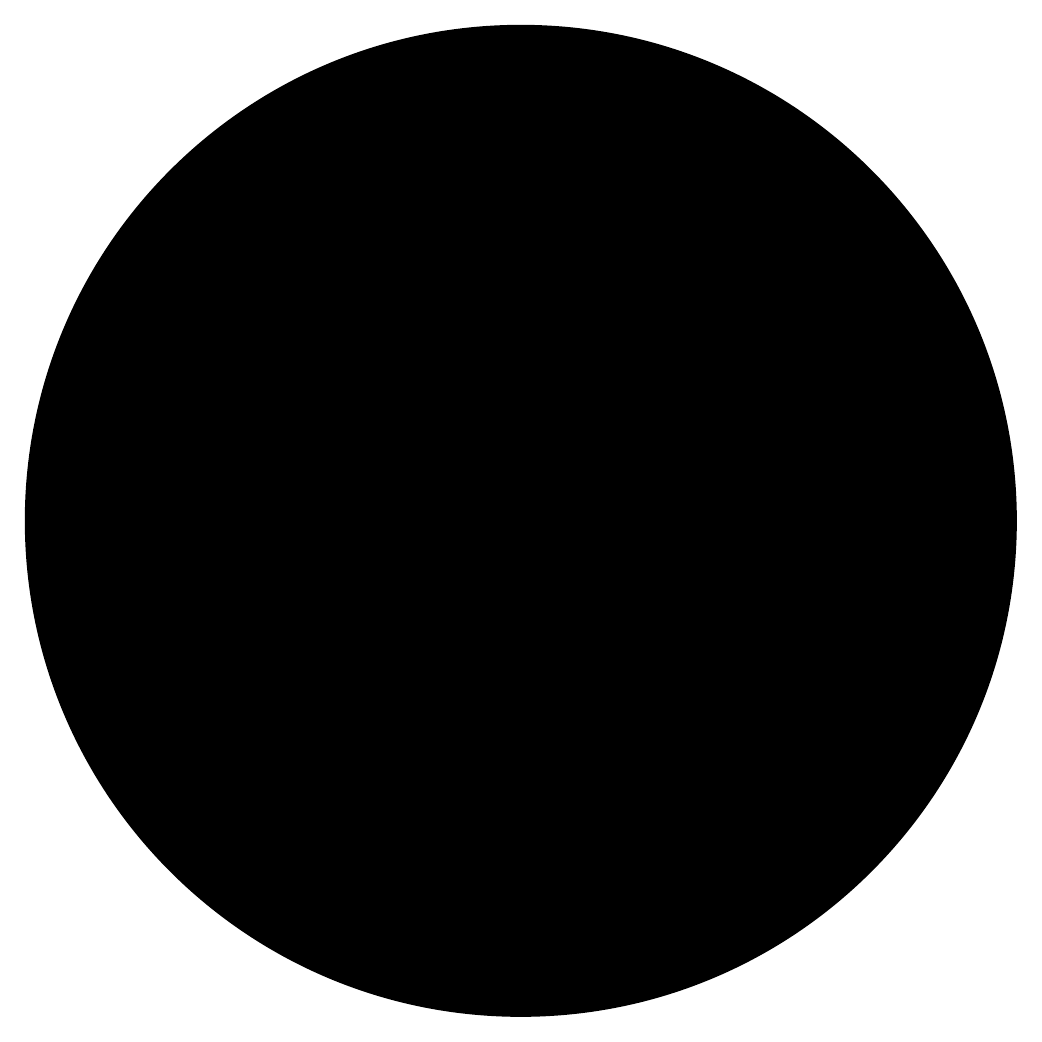}
         \caption{$\mathcal{M}_\infty$}
     \end{subfigure}
        \caption{
        Depicted are nine Poincaré disk representations of the hyperbolic space of the Lorentz group. 
        Each dot represents one frame-grid.
        If two dots are connected by one line, they are interlaced, as previously described, corresponding to a Lorentz boost with a rapidity $w = \ln(2)$ . 
        The first disk corresponds to \mbox{Figure \ref{fig:2F}} of the previous section.
        The other eight disks are 2D cross sections of 3D Poincaré ball models corresponding to graphs $\mathcal{M}_n$ that can be seen to be constructed through recursive interlacing of further frame-grids.
        As $n$ goes towards infinity, the hyperbolic space is filled completely, as depicted in the last disk according to \mbox{Theorem \ref{fullLorentz}} .\\ 
        We provide the program to generate these images \mbox{at \cite{leuenberger2021}.}\\
        }
        \label{fig:nine disks}
\end{figure}
\end{samepage}

\subsubsection{Emergent Isotropy}
The Wigner rotation is important for the GRIDS, since its infinitely repeated occurrence yields all possible orientations combined with all possible rapidities, and thus yields isotropy without requiring any additional rules, 
i.e.: The rules of \mbox{Section \ref{Euclid}} become obsolete here.
We show this claim to be true in the proof of \mbox{Theorem \ref{fullLorentz}}. 
In preparation for it, we firstly prove the following lemma.
\\
\begin{lemma}\label{circle1}
In the hyperbolic plane, let us perform a sequence of steps, each covering the same distance $w$ ; between the steps we change our direction by the angle $\phi$. 
Then the resulting set of visited points will be uniformly distributed on a circle, if $w$ and $\phi$ meet the following condition:
\[4\ (\cosh(\frac{w}{2})\sin(\frac{\pi-\phi}{2}))^2 \ \in \ [0,4] \cap \mathbb{Q} \setminus \mathbb{Z} \]
\end{lemma}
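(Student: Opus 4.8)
The plan is to recognise the rule ``take a step of length $w$, then turn by $\phi$'' as a single fixed orientation-preserving isometry $g$ of the hyperbolic plane, applied repeatedly to the starting frame (point together with heading). The hypothesis is engineered to force $g$ to be elliptic --- a hyperbolic rotation about some point $C$ --- and moreover a rotation through an angle that is an irrational multiple of $2\pi$. Once that is established, the visited points lie on the hyperbolic circle centred at $C$ through the starting point $p_0$, and their equidistribution on that circle follows from Weyl's equidistribution theorem, exactly as in the proof of Theorem~\ref{euclid}.

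Concretely, I would model $\mathbb{H}^2$ by the upper half-plane with orientation-preserving isometry group $\mathrm{PSL}(2,\mathbb{R})$. Since the turn of angle $\phi$ is performed about the endpoint $p_1$ of the step and $\mathrm{rot}_{p_1,\phi} = T_w\,\mathrm{rot}_{p_0,\phi}\,T_w^{-1}$, the composite is $g = \bigl(T_w\,\mathrm{rot}_{p_0,\phi}\,T_w^{-1}\bigr)T_w = T_w R_\phi$, where $T_w$ is the translation by $w$ along the geodesic through $p_0$ in the initial heading and $R_\phi=\mathrm{rot}_{p_0,\phi}$. Placing $p_0$ at $i$ with vertical heading, so $T_w=\mathrm{diag}(e^{w/2},e^{-w/2})$ and $R_\phi = \bigl(\begin{smallmatrix}\cos\frac\phi2 & \sin\frac\phi2\\ -\sin\frac\phi2 & \cos\frac\phi2\end{smallmatrix}\bigr)$, one computes
\[ \mathrm{tr}(g) \;=\; 2\cosh\!\left(\tfrac w2\right)\cos\!\left(\tfrac\phi2\right) \;=\; 2\cosh\!\left(\tfrac w2\right)\sin\!\left(\tfrac{\pi-\phi}{2}\right), \]
so the quantity in the hypothesis is precisely $(\mathrm{tr}\,g)^2$, and the $k$-th visited point is $g^k p_0$.

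Next, since $(\mathrm{tr}\,g)^2$ is a non-integer rational in $[0,4]$ it in fact lies strictly between $0$ and $4$, so $g$ is elliptic: it fixes a unique $C\in\mathbb{H}^2$ and acts as the rotation about $C$ through the angle $\beta$ with $\mathrm{tr}(g)=\pm2\cos(\beta/2)$, i.e. $\cos\beta=\tfrac12(\mathrm{tr}\,g)^2-1$. Because $w\neq0$ we have $g(p_0)\neq p_0$, hence $C\neq p_0$ and the orbit $\{g^k p_0:k\in\mathbb{Z}\}$ lies on the nondegenerate hyperbolic circle about $C$ through $p_0$, the $k$-th point occupying angular position $\alpha_0+k\beta \pmod{2\pi}$. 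Now $\cos\beta=\tfrac12(\mathrm{tr}\,g)^2-1$ is rational and is \emph{not} one of $0,\pm\tfrac12,\pm1$ --- those are exactly the values giving $(\mathrm{tr}\,g)^2\in\{2,1,3,0,4\}\subset\mathbb{Z}$, which the hypothesis excludes --- so by Niven's theorem (the only rational multiples of $\pi$ with rational cosine have cosine in $\{0,\pm\tfrac12,\pm1\}$) the angle $\beta$ is an irrational multiple of $\pi$, hence of $2\pi$. Weyl's equidistribution theorem applied to $k\mapsto\alpha_0+k\beta \pmod{2\pi}$ then yields that the visited points are uniformly distributed on the circle.

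I expect the only real obstacle to be bookkeeping: fixing the angle conventions so that the composite isometry has trace $2\cosh(\tfrac w2)\sin(\tfrac{\pi-\phi}{2})$ and not, say, the cosine of that same half-angle, and disposing of the borderline cases $(\mathrm{tr}\,g)^2=4$ (parabolic or identity) and $w=0$ (degenerate circle) --- all harmless, since the hypothesis already excludes them through ``$\setminus\mathbb{Z}$'' and the rationality requirement. The two non-elementary ingredients, Niven's theorem and Weyl equidistribution, are standard, and the latter is already invoked for Theorem~\ref{euclid}.
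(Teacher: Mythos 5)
Your proposal is correct, but it reaches the key fact --- that the per-step rotation angle about the circle's centre is an irrational multiple of $\pi$ --- by a genuinely different route than the paper. The paper argues synthetically: it locates the centre $A$ as the intersection of the perpendicular bisector of $B_0B_1$ with the bisector of the turning angle at $B_1$, applies the right-hyperbolic-triangle identity $\cos(\alpha)=\cosh(a)\sin(\beta)$ with $a=\frac{w}{2}$ and $\beta=\frac{\pi-\phi}{2}$ to obtain the half-rotation angle $\alpha=\arccos(\cosh(\frac{w}{2})\sin(\frac{\pi-\phi}{2}))$, and then invokes Varona's theorem on the irrationality of $\frac{1}{\pi}\arccos(\sqrt{r})$ for $4r\in[0,4]\cap\mathbb{Q}\setminus\mathbb{Z}$. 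You instead package step-plus-turn as a single element $g=T_wR_\phi$ of $\mathrm{PSL}(2,\mathbb{R})$, identify the hypothesis as precisely $(\mathrm{tr}\,g)^2\in[0,4]\cap\mathbb{Q}\setminus\mathbb{Z}$, read off ellipticity and the rotation angle from $\cos\beta=\tfrac12(\mathrm{tr}\,g)^2-1$, and finish with Niven's theorem. The two computations agree (your $\beta$ equals the paper's $2\alpha$), and Varona's result is essentially Niven's theorem applied to $2\alpha$, so the number-theoretic core is identical; what your version buys is a cleaner justification of a step the paper takes for granted, namely the \emph{existence} of a centre equidistant from all visited points: the condition $(\mathrm{tr}\,g)^2<4$ proves the orbit lies on a genuine hyperbolic circle (elliptic fixed point) rather than a horocycle or equidistant curve, whereas the paper simply posits the centre $A$. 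Both arguments then conclude with the Weyl equidistribution theorem in the same way.
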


\begin{proof}
Starting at a point $B_0$, perform one of the described steps to visit another point $B_1$. 
Let the center of a circle, on which all visited points are located, be called $A$.
Then the center $A$ must lie on the perpendicular bisector line of the line segment $B_0 B_1$. 
Let the midpoint of the line segment $B_0 B_1$ be called $C$. 
Continuing from $B_1$, perform the next step to visit the another point $B_2$. 
The center $A$ then must also lie on the angle bisector line of the angle $\measuredangle B_0 B_1 B_2$. 
Note that the hyperbolic triangle $\bigtriangleup A B_1 C$ is a right triangle, since $\measuredangle A C B_1 = \frac{\pi}{2}$. 
Let $\alpha := \measuredangle B_1 A C$ , $\beta := \measuredangle C B_1 A$ , and $a := |B
_1 C|$ , as is usual in trigonometry.\\
Due to the bisections, we get $a = \frac{|B_0 B_1|}{2} = \frac{w}{2}$ as well as $\beta = \frac{\measuredangle B_0 B_1 B_2}{2} = \frac{\pi-\phi}{2}$ .\\
The following equation applies to right hyperbolic triangles such as $\bigtriangleup A B_1 C$ :\\
$\cos(\alpha) = \cosh(a) \sin(\beta)$ , from which follows:
$\alpha = \arccos( \cosh(a) \sin(\beta) )$ .\\
By taking further steps, we visit the set of points $B = \{B_0,B_1,B_2,B_3,..\}$.
Due to the required regularities, each step moves us around the center $A$ by the same angle that is equal to $2\alpha$, due to the bisections, 
i.e.: $\frac{\measuredangle B_{k-1} A B_k}{2} = 2\alpha$ , $\forall k \in \mathbb{N}$ .\\
Therefore, if $\frac{2\alpha}{2\pi} = \frac{\alpha}{\pi}$ is an irrational number, we will never revisit the starting point exactly, no matter how many laps we completed on the circle. 
Furthermore, according to the \textit{equidistribution theorem}\cite{weyl1916gleichverteilung}, if $\frac{\alpha}{\pi}$ is an irrational number, then the points will be uniformly distributed on the circle.\\
Therefore, in order to prove our lemma, we only need to show that the condition \
$4(\cosh(\frac{w}{2})\sin(\frac{\pi-\phi}{2}))^2 \in [0,4] \cap \mathbb{Q} \setminus \mathbb{Z}$ \ implies the irrationality of $\frac{\alpha}{\pi}$ , 
i.e.: $\frac{\alpha}{\pi}\not\in\mathbb{Q}$.\\
Recall that $a = \frac{w}{2}$ and $\beta = \frac{\pi-\phi}{2}$ . 
We can use these two equations to rewrite the condition as follows:\
$4(\cosh(a)\sin(\beta))^2 \in [0,4] \cap \mathbb{Q} \setminus \mathbb{Z}$ .\\
We then substitute $(\cosh(a)\sin(\beta))^2$ with a variable $r$ : \ \ $4r \in [0,4] \cap \mathbb{Q} \setminus \mathbb{Z}$ .\\
The theorem about the arccosine function of Varona \cite{varona2006rational} then implies that the number $\frac{1}{\pi}\arccos(\sqrt{r})$ cannot be rational, 
i.e.: $\frac{1}{\pi}\arccos(\sqrt{r}) \not\in \mathbb{Q}$.\\
We then unsubstitute $r$ and get:
$\frac{1}{\pi}\arccos(\cosh(a)\sin(\beta)) \not\in \mathbb{Q}$ .\\
Recall that $\alpha = \arccos( \cosh(a) \sin(\beta) )$ ;
and thus it follows that $\frac{\alpha}{\pi}\not\in\mathbb{Q}$.
\\
\textbf{q.e.d.}
\end{proof}
\ \\
\begin{corollary}\label{circle2}
If the sequence of steps is infinite, then the set of visited points will form a continuous circle.\\
\\
\end{corollary}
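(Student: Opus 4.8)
The plan is to upgrade the equidistribution statement of \mbox{Lemma \ref{circle1}} from a statement about an arbitrarily long finite walk to the limiting infinite walk, and then to observe that an equidistributed (hence dense) subset of a circle has that whole circle as its closure. No new hyperbolic-geometry computation is needed; everything is inherited from the proof of \mbox{Lemma \ref{circle1}}.

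First I would recall from that proof that, under the stated condition on $w$ and $\phi$, every visited point $B_k$ lies on one fixed hyperbolic circle centred at $A$, and that consecutive visited points are separated by the same central angle $2\alpha$, where $\frac{\alpha}{\pi} \notin \mathbb{Q}$. Fixing an angular coordinate on that circle with $B_0$ at angle $0$, the point $B_k$ then sits at angle $2k\alpha \bmod 2\pi$, so the infinite sequence of steps produces exactly the orbit $\{\, 2k\alpha \bmod 2\pi : k \in \mathbb{N} \,\}$ of an irrational rotation.

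Next I would invoke the classical fact that for irrational $\beta$ the sequence $k\beta \bmod 1$ is equidistributed, and in particular dense, in $[0,1)$; taking $\beta = \frac{\alpha}{\pi}$ shows that $\{\, 2k\alpha \bmod 2\pi \,\}$ is dense in $[0,2\pi)$. Hence the set of visited points is a dense subset of the hyperbolic circle of radius $|AB_0|$ about $A$. Taking the topological closure — which is what is meant by the visited points ``forming a continuous circle'' — yields the entire circle, since the circle is a closed set that contains a dense subset of itself. This establishes the corollary.

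The main obstacle is interpretational rather than technical: one must be careful that the infinite orbit itself is only dense, never literally equal to the full circle, so the statement has to be read as ``the closure of the set of visited points is a circle'' (equivalently, the visited points accumulate at every point of it). Once this reading is accepted, the result is an immediate consequence of the irrationality of $\frac{\alpha}{\pi}$ already proven inside \mbox{Lemma \ref{circle1}} together with the density of irrational rotation orbits on the circle.
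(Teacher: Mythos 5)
Your proposal is correct and matches the paper's intent: the paper gives no separate proof of this corollary, treating it as an immediate consequence of the equidistribution (hence density) of the orbit of the irrational rotation by $2\alpha$ established inside Lemma \ref{circle1}, which is exactly the argument you spell out. Your explicit caveat that the infinite orbit is only dense and that ``forming a continuous circle'' must be read as the closure being the full circle is a fair and welcome clarification of what the paper leaves implicit.
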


\begin{theorem}{\ \textbf{$\mathcal{M}_\infty$ fills the Lorentz group:}}\label{fullLorentz}
\\
Let $\mathcal{M}_\infty$ be $\mathcal{M}_n$ for $ n \rightarrow \infty$. \
The set of Lorentz transformations corresponding to all frame-grids of $\mathcal{M}_\infty$ is equal to the continuous set of all Lorentz transformations, i.e.: The entire Lorentz group.
\\
\end{theorem}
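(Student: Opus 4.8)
The plan is to reduce the statement to a density assertion about a finitely generated group and then feed it into \mbox{Lemma \ref{circle1}}. First I would observe that, by construction, the set of Lorentz transformations attached to the frame-grids of $\mathcal{M}_\infty$ is exactly the subgroup $\Gamma \leq SO^{+}(3,1)$ generated by the six ``elementary'' boosts of rapidity $\ln 2$ along the coordinate directions $\textsc{x}_\pm,\textsc{y}_\pm,\textsc{z}_\pm$: every frame-grid is reached by a finite chain of interlacings, each contributing one such boost composed on the appropriate side, and conversely every word in these six generators corresponds to such a chain; since the generators come in inverse pairs, the generated semigroup is a group. Reading $\mathcal{M}_\infty$ as $\lim_{n\to\infty}\mathcal{M}_n$, the assertion ``$\mathcal{M}_\infty$ fills the Lorentz group'' is the statement that the frame-grid transformations accumulate to everything, i.e.\ $\overline{\Gamma}=SO^{+}(3,1)$, so it suffices to prove that $\Gamma$ is dense. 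Writing a Lorentz transformation in polar form $\Lambda=R\,B$ with $B$ a pure boost (a point of the hyperbolic velocity space $\mathbb{H}^3$) and $R\in SO(3)$ a Wigner rotation, it is enough to show: (A) the boost parts of the elements of $\Gamma$ are dense in $\mathbb{H}^3$ (isotropy and all rapidities); and (B) the accompanying Wigner rotations are dense in $SO(3)$.

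For (A) I would apply \mbox{Lemma \ref{circle1}} and \mbox{Corollary \ref{circle2}} to the zig-zag walk in velocity space with step length $w=\ln 2$ and constant turning angle $\phi=\tfrac{\pi}{2}$ — a walk realizable inside $\mathcal{M}_\infty$ precisely because at every frame-grid all six mutually perpendicular interlacing directions are available. The required arithmetic condition is met:
\[
4\left(\cosh\tfrac{\ln 2}{2}\,\sin\tfrac{\pi-\pi/2}{2}\right)^{2}
\;=\; 4\cdot\cosh^{2}\tfrac{\ln 2}{2}\cdot\tfrac12
\;=\; 4\cdot\tfrac98\cdot\tfrac12
\;=\; \tfrac94 \;\in\; [0,4]\cap\mathbb{Q}\setminus\mathbb{Z},
\]
using $\cosh\tfrac{\ln 2}{2}=\tfrac{\sqrt2+1/\sqrt2}{2}=\tfrac{3\sqrt2}{4}$. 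Hence the boost points visited along such a walk are dense on a hyperbolic circle through $O$; permuting the roles of $\textsc{x},\textsc{y},\textsc{z}$ yields such a dense circle in every coordinate $2$-plane through $O$. The self-similar nature of the construction — every frame-grid spawns the same configuration of interlacings relative to its own frame — then lets me iterate: re-running the walk from a dense set of already-reached boost points produces further dense circles, and the union over all iterations is dense in $\mathbb{H}^3$ (dimension bookkeeping: a one-parameter family of circles sweeps a surface, one more iteration sweeps a solid region, and passing to the limit leaves no gaps).

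For (B), a chain of boosts whose boost part returns to a small neighbourhood of $O$ leaves a residual Wigner rotation; letting the intermediate directions range over the dense set produced in (A) makes these residual rotations dense in $SO(3)$. Combining (A) and (B): $\overline{\Gamma}$ is a closed subgroup of $SO^{+}(3,1)$ that surjects onto $\mathbb{H}^3$ under $\Lambda\mapsto B$ and whose intersection with the point-stabilizer $SO(3)$ is all of $SO(3)$, whence $\overline{\Gamma}=SO^{+}(3,1)$. More robustly, the dense circles of (A) force the Lie algebra of $\overline{\Gamma}$ to contain three linearly independent infinitesimal boosts $K_1,K_2,K_3$; since the brackets $[K_i,K_j]$ are the infinitesimal rotations $J_k$, this Lie algebra is all of $\mathfrak{so}(3,1)$, so $\overline{\Gamma}$ contains — hence equals — the identity component $SO^{+}(3,1)$. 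Letting $n\to\infty$ turns this density into the claimed equality.

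I expect the main obstacle to be exactly the step upgrading the Lemma's output — density \emph{on a circle} — to density \emph{in the three-dimensional velocity space and in the six-dimensional group}: one must verify that re-running the zig-zag from boosted frame-grids genuinely reaches new directions and rapidities rather than staying trapped on a lower-dimensional locus, and that the accumulated Wigner rotations are not confined to a proper closed subgroup of $SO(3)$. The Lie-algebraic observation that three independent infinitesimal boosts already generate $\mathfrak{so}(3,1)$ is what makes this tractable, since it reduces the whole matter to exhibiting enough continuous (one-parameter) structure inside $\overline{\Gamma}$, which \mbox{Lemma \ref{circle1}} together with \mbox{Corollary \ref{circle2}} supplies.
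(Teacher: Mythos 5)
Your proposal is correct and its core coincides exactly with the paper's: both hinge on \mbox{Lemma \ref{circle1}} and \mbox{Corollary \ref{circle2}} applied to the zig-zag walk with $w=\ln 2$ and $\phi=\tfrac{\pi}{2}$, and both verify the identical arithmetic condition $4\bigl(\cosh\tfrac{\ln 2}{2}\sin\tfrac{\pi}{4}\bigr)^2=\tfrac94\in[0,4]\cap\mathbb{Q}\setminus\mathbb{Z}$ to obtain dense (hence, in the limit, continuous) circles in the hyperbolic velocity space. Where you diverge is in the final upgrade from ``dense circles'' to ``all of $SO^{+}(3,1)$'': the paper argues informally that concatenations of perpendicular circular arcs have ``enough degrees of freedom'' to reach any point of $\mathbb{H}^3$ at any orientation, whereas you recast the claim as density of the finitely generated subgroup $\Gamma$, split it into transitivity on $\mathbb{H}^3$ plus fullness of the stabilizer $SO(3)$ (the Wigner rotations), and back this up with the Lie-algebraic observation that three independent infinitesimal boosts bracket to the rotations and hence generate $\mathfrak{so}(3,1)$, so a closed subgroup containing enough one-parameter structure must be everything. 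This is a genuinely sharper way to close the argument --- it isolates precisely the point the paper leaves heuristic (that the circles are not trapped on a lower-dimensional locus and that the Wigner rotations are not confined to a proper closed subgroup of $SO(3)$), and you correctly identify it as the main remaining obstacle; the only caveat is that the tangents to the ``helices'' in the group are boost--rotation mixtures rather than pure boosts, so extracting the three independent $K_i$ from the circles still requires a small additional computation that neither you nor the paper carries out in full.
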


\begin{proof}
We firstly need to show that, for each point in the 3D hyperbolic space, it is possible to reach the point through some sequence of steps corresponding to the Lorentz transformations corresponding to some sequence of interlaced frame-grids within $\mathcal{M}_\infty$, when starting from the central frame-grid, that is the center point in the the disk models of \mbox{Figure \ref{fig:nine disks}} . 
Recall that the rapidity $w$ of the individual Lorentz boosts between frame-grids in $\mathcal{M}_n$ is equal to $\ln(2)$. 
In $\mathcal{M}_n$, the angle by which the direction changes after each boost, is either zero or $\frac{\pi}{2}$. 
Let us therefore apply \mbox{Lemma \ref{circle1}} with $w=\ln(2)$ and $\phi=\frac{\pi}{2}$.
We need to verify whether the condition of \mbox{Lemma \ref{circle1}} holds for these values of $w$ and $\phi$ :
\\
$ 4\ (\cosh(\frac{w}{2})\sin(\frac{\pi-\phi}{2}))^2 
\ = \ 4\ (\cosh(\frac{\ln(2)}{2})\sin(\frac{\pi}{4}))^2 
\ = \ 4\ (\frac{3}{2\sqrt{2}}\cdot \frac{1}{\sqrt{2}})^2 
\ = \ \frac{9}{4}$ 
\\
Now since \ $\frac{9}{4}
\in \ [0,4] \cap \mathbb{Q} \setminus \mathbb{Z} $ \ is true, the condition is fulfilled.\\
Due to the infinities, we can furthermore apply \mbox{Corollary \ref{circle2}}. 
It implies that the set of Lorentz transformations corresponding to all frame-grids of $\mathcal{M}_\infty$ forms a shape that contains many continuous circles in the 3D hyperbolic space of the Lorentz group. 
As a side-note, it therefore forms continuous helices in the Lorentz group itself, which is 6-dimensional, due to the three additional degrees of freedom for rotations.  
In the 3D hyperbolic space, we can thus choose a sequence of perpendicular steps of length $w$ to move arbitrarily close to any point on a continuous circle, thus also allowing us to move by arbitrarily small distances from the origin. 
At any point of this circle we can then choose to keep moving on a different circle, that is perpendicular to the previous circle.
We can then change to further circles an arbitrary number of times, resulting in a path that can be thought of as a composition or concatenation of perpendicular circular arcs. 
Such a path has enough degrees of freedom to reach any point of the 3D hyperbolic space.
Furthermore, such a path can reach any point at an arbitrary orientation, thus the entire Lorentz group, which is six-dimensional, is continuously filled by the Lorentz transformations of the frame-grids of $\mathcal{M}_\infty$.
\\
\textbf{q.e.d.}
\end{proof}
$\ $

\subsubsection{Speed of Light}
Recall \mbox{Theorem \ref{euclid},} where the geodesic distance on a graph approximated the Euclidean distance.
The following theorem, \mbox{Theorem \ref{luminal}}, is analogous to this, in that the speed of the quickest path along light-like edges of $\mathcal{M}_\infty$ approximates a constant speed equal to one, through the 3D Euclidean space.
Let these speeds be called the \textit{speed of light} for simplicity.
The following theorem then says that the longer the path is that the light travels, the more accurately its speed will approximate a constant in any direction, i.e.: isotropically.
Some concrete examples of calculations are provided below the proof.
\\
\begin{samepage}
\begin{theorem}\label{luminal}
{\textbf{Accuracy of the Speed of Light on $\mathcal{M}_\infty$}
}\\
For all frame-grids $F_0$ of $\mathcal{M}_\infty$:\\
Given two randomly selected spatial locations $\Vec{q},\Vec{u}\in\mathbb{Z}^3$ on $F_0$:\\
Let \ $s := |\Vec{q}-\Vec{u}|$ . 
Consider the following path: 
Starting at a vertex of $F_0$, located at $\Vec{q}$, at a randomly selected departure time, move along \mbox{light-like edges} of $\mathcal{M}_\infty$ to arrive as quickly as possible at another vertex of $F_0$, located at $\Vec{u}$, at the resulting arrival time.
Let the integer $\Delta t$ then be the difference between the departure and the arrival time in the frame of reference of $F_0$.
The following inequality then expresses how little the velocity \ $\frac{s}{\Delta t}$ will deviate from one : 
\[\text{expected error of the speed of light}
\ \ := \ \ \EX(|1 - \frac{s}{\Delta t}|)
\ \ < \ \
150\cdot\frac{ \log_2(s) +6 }{s}\]
\[= \ \ \mathcal{O}\left(\frac{\log s}{s}\right)
\]
\end{theorem}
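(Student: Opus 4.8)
The plan is to run the strategy of Theorem~\ref{euclid} one dimension and one signature up: there a near-straight Euclidean path was assembled by routing through a lattice graph whose orientation happened to be well aligned with the chord $\overline{UV}$; here a near-light-like path in $F_0$ will be assembled by routing through frame-grids of $\mathcal{M}_\infty$ whose light-like edges, when the Minkowski embedding is written in the coordinates of $F_0$, point almost exactly along the spatial direction $\vec n := (\vec u-\vec q)/s$. The one structural fact that makes this work is that a light-like edge is light-like in \emph{every} inertial frame: if $F_j$ is any frame-grid and we follow the \emph{same} light-like edge direction of $F_j$ for $m$ steps, then in $F_0$-coordinates this is $m$ copies of a single fixed null $4$-vector $n^{(j)}$, i.e.\ a straight null ray whose spatial length equals its elapsed $F_0$-time exactly. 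Hence such a run already has speed $1$ in $F_0$, and the only sources of $\Delta t - s$ are (a) the angular mismatch between $\vec n$ and the spatial direction of $n^{(j)}$, and (b) the overhead of walking from the vertex at $\vec q$ onto $F_j$ and back off it onto the vertex at $\vec u$. The theorem claims both can be kept to an additive $\mathcal{O}(\log s)$, and $\Delta t = s + \mathcal{O}(\log s)$ is exactly what $\EX(|1-\tfrac{s}{\Delta t}|)=\mathcal{O}((\log s)/s)$ says.

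Next I would record why the available null directions in $F_0$ refine \emph{dyadically}, which is where the $\log_2$ enters. The elementary boost matrix of Subsection~3.2 is $\tfrac14$ times an integer matrix, so every frame-grid of $\mathcal{M}_\infty$ carries a rational Lorentz transformation relative to $F_0$, and the $F_0$-images of its six light-like edges are rational null vectors. Concretely, the frame-grid obtained from $F_0$ by $k$ successive boosts along one axis (rapidity $k\ln 2$, so $e^{k\ln 2}=2^{k}$) sends the transverse light-like edge $\textsc{y}_+$ to the $F_0$-displacement $(\sinh(k\ln 2),1,0,\cosh(k\ln 2))$, whose spatial direction is tilted from the boost axis by an angle $\approx 2^{-k+1}$, with per-edge time-overhead $\cosh(k\ln 2)-\sinh(k\ln 2)=2^{-k}$ relative to travelling straight along the axis. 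Thus boosted frame-grids supply light-like steps that are ``almost axial'' with a tilt and an overhead that halve with each extra boost, and Theorem~\ref{fullLorentz} (via Lemma~\ref{circle1} and Corollary~\ref{circle2}) guarantees that, mixing the six axes and the $90^\circ$ turns, every null direction is approximated; for a \emph{fixed} target resolution this is a finite verification, so there is no appeal to ineffective equidistribution at this stage.

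The path is then built in five stages paralleling those of Theorem~\ref{euclid}: (1) from the vertex at $\vec q$ on $F_0$, walk $\mathcal{O}(1)$ edges onto a frame-grid $F^{(0)}$ reached by $\mathcal{O}(1)$ interlacings whose best light-like direction lies within a fixed small angle $\theta_0$ of $\vec n$; (2) from $F^{(0)}$, a dyadic fine-tuning: for $i=1,\dots,\lceil\log_2 s\rceil+\mathcal{O}(1)$, hop $\mathcal{O}(1)$ edges onto a further-boosted frame-grid and spend $\mathcal{O}(1)$ of its transverse light-like edges to cancel the next binary digit of the residual tilt, halving the residual angle each time until it is $\mathcal{O}(\sqrt{(\log s)/s})$; (3) the long straight run of light-like edges in the final, well-aligned frame-grid, of elapsed $F_0$-time $s+\mathcal{O}(1)$; (4) the return through the chain of frame-grids, again $\mathcal{O}(\log s)$ edges; (5) a final $\mathcal{O}(1)$ walk onto the vertex at $\vec u$, with the randomly chosen departure time absorbing the parity constraint $x+y+z+t\equiv 0$. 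Summing the contributions, the residual-angle mismatch costs $\mathcal{O}(s\cdot((\log s)/s)) = \mathcal{O}(\log s)$ of time, the $\lceil\log_2 s\rceil+\mathcal{O}(1)$ dyadic hops cost $\mathcal{O}(\log s)$, and stages (1),(3),(5) cost $\mathcal{O}(1)$, so $\Delta t = s + \mathcal{O}(\log s)$ and $\EX(|1-\tfrac{s}{\Delta t}|) \le (\Delta t - s)/\Delta t = \mathcal{O}((\log_2 s)/s)$. Tracking the constants through stages (1)--(5)---in particular a $\mathcal{O}(1)$ bound on the edge-cost and the $F_0$-displacement of each hop, and a $\mathcal{O}(1)$ bound on the crude-alignment error $\theta_0$---is what pins down the stated $150\cdot(\log_2(s)+6)/s$; the expectation over random $\vec q,\vec u$ is needed only to fold in the easy extreme cases (e.g.\ $\vec n$ already axial, where the error is $0$, versus generic directions) and to absorb the $\log_2$ of the direction's aspect ratio into the additive constant, and the expectation over departure time only averages away the parity offset.

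The main obstacle I anticipate is combinatorial rather than conceptual: proving that the path above can actually be \emph{routed} through the interlacing cheaply, i.e.\ that between any two consecutively interlaced frame-grids the shared vertices form a sublattice of bounded index (in both frames, using Rules~2--4), so that each of the $\mathcal{O}(\log s)$ hops costs only $\mathcal{O}(1)$ light-like edges and moves the position by only $\mathcal{O}(1)$ in $F_0$, rather than by an amount that grows with the depth of the chain; without this, stages (1), (2), (4), (5) would overrun the $\mathcal{O}(\log s)$ budget. A secondary obstacle, if one prefers not to spell out the explicit dyadic recipe of stage (2), is to replace it by a quantitative form of Theorem~\ref{fullLorentz}---that $k$ interlacings of rapidity-$\ln 2$ boosts approximate an arbitrary null direction to within $\mathcal{O}(2^{-k})$, equivalently that the rational null directions arising from length-$k$ words in the given boost matrices are $\mathcal{O}(2^{-k})$-dense---which upgrades the bare equidistribution argument to an effective rate; the explicit dyadic construction is the cleaner route and is also why the bound is naturally stated with $\log_2$.
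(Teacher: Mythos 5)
Your skeleton---long near-aligned null runs inside single frame-grids, alternating with short hops across interlacings, with the excess time split into an angular-mismatch cost and a hop cost---is the same as the paper's. But at the decisive quantitative step you take a different route, and that route has a genuine gap that you flag but underestimate. The paper never leaves a sphere of fixed hyperbolic radius $r=\artanh(1/\sqrt{3})$ on which $F_0$ itself sits: all $2^n$ frame-grids it routes through are within rapidity $2r=\ln(2+\sqrt{3})$ of $F_0$, so every light-like edge used has $F_0$-duration at most $2+\sqrt{3}=\mathcal{O}(1)$ and the worst-case hop overhead is a constant times the tree depth $n$. Alignment with the target direction is then obtained not constructively but probabilistically: by Lemma~\ref{circle1} and equidistribution the $m>2^{n+2}$ available null directions are treated as uniform on the sphere, one asks for \emph{three} of them within angle $\varepsilon$ of the target whose spherical triangle encloses it (three, not one, because a single frame-grid offers only six null directions and cannot hit an arbitrary lattice displacement), and an order-statistics computation gives $\EX(\varepsilon)<64\,m^{-1/5}$ and $\Var(\varepsilon)<81\,m^{-2/5}$; setting $n=\lceil\tfrac{5}{2}\log_2(s)+10\rceil$ produces the constant $150$. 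The expectation in the theorem is thus genuinely an average over the random direction, not a cosmetic wrapper around a worst-case bound.

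Your dyadic scheme instead climbs to rapidity $k\ln 2$ with $k=\Theta(\log s)$, and two of its asserted $\mathcal{O}(1)$'s do not hold as stated. First, the $F_0$-displacement and $F_0$-duration of a single light-like edge of a frame-grid at rapidity $k\ln 2$ is $\Theta(2^k)$ (your own computation: the transverse edge has time component $\cosh(k\ln 2)$), so ``each hop moves the position by only $\mathcal{O}(1)$ in $F_0$'' is false; the time accounting survives only if every non-infinitesimal edge traversed at depth $k$ already points within angle $\mathcal{O}(2^{-k/2})$ of the target, which presupposes exactly the alignment the hops are supposed to produce. Second, and more fundamentally, the claim that length-$k$ words in the boost generators yield null directions $\mathcal{O}(2^{-k})$-dense around an \emph{arbitrary} direction in $S^2$ is verified only for directions in a coordinate plane reached by boosts along one axis; for general directions you must compose boosts along different axes, Wigner rotations reorient each successive frame-grid's axes, and ``finite verification at fixed resolution'' does not apply because the resolution you need is $s^{-1/2}$, which is not fixed. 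The paper's own machinery (Varona's irrationality theorem plus equidistribution) establishes only density in the limit, with no rate; an effective $2^{-k}$ rate is a strictly stronger, unproven statement, and once the skeleton is in place it is the entire content of the theorem.
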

{\ }\end{samepage}
\begin{proof}
To prove this formula, we show the existence of a sufficiently quick path. 
Our path starts on frame-grid $F_0$ at location $\Vec{q}$, then usually leads through multiple other frame-grids of $\mathcal{M}_\infty$ before returning back to the same frame-grid $F_0$ but at a different location that is $\Vec{u}$.
This path is a alternation of several long straight paths on single frame-grids with several shorter non-straight paths that lead through multiple frame-grids. 
It is therefore analogous to the path that we described back in \mbox{Theorem \ref{euclid}} , but consists of seven instead of five parts, due to the increased number of dimensions.

In what follows, for simplicity, we will restrict the set of frame-grids that the path is allowed to lead through.
Recall from the proof of \mbox{Theorem \ref{fullLorentz}} that the condition of \mbox{Lemma \ref{circle1}} is fulfilled for $w=\ln(2)$ and $\phi=\frac{\pi}{2}$.
Hence, there are sequences of interlaced frame-grids of $\mathcal{M}_\infty$, such that their corresponding sequences of points within the 3D hyperbolic space are uniformly distributed on circles. 
Also note that each such point is part of twelve such circles that are located on three perpendicular planes; obviously, these twelve circles are the pairwise intersections of eight spheres.
We are concerned with only one of these spheres.
Analogously to how a the sequence resulted in a uniform distribution over a circle, a binary tree can result in a uniform distribution over an aforementioned sphere.
For simplicity, we restrict the set of frame-grids that our path is allowed to lead through, to such a spherical binary tree, where each node is a frame-grid.
One of these frame-grids must be $F_0$.
Each edge in the binary tree indicates that the two frame-grids are interlaced with each other.
In the 3D hyperbolic space, the points corresponding to these frame-grids are uniformly distributed across a sphere.
Let this binary tree be an \textit{unrooted binary tree} containing $2^n$ nodes, such that the number of steps required to move between any two nodes is at most $n$.
We will set the parameter $n$ later in this proof.
Note that not only are these points uniformly distributed across that sphere, but also the orientations of these frame-grids are uniformly distributed.

Recall that the relative rapidity between interlaced frame-grids was $w = \ln(2)$ and in the binary tree, each non-leaf node has three neighbouring nodes. 
This translates to a point on the sphere, let it be called $A$, being at a distance of $\ln(2)$ from three other points on the sphere in three perpendicular directions.
We now use this fact in order to calculate the radius of this sphere within the hyperbolic space.
We use a right hyperbolic triangle, where the hypotenuse is the line segment from the sphere's center to the point $A$, which is of length $r$, i.e.: The radius of the sphere. 
One of the catheti, $b$, is a half of the line segment from $A$ to one of the three aforementioned points.
This cathetus $b$ is thus of length $\frac{\ln(2)}{2}$.
Due to the symmetry between the three aforementioned points located in perpendicular directions, the angle $\alpha$ at $A$ must be equal to the angle between the diagonal of a cube and one of its edges.
We therefore obtain $\alpha = \arctan(\sqrt{2})$, which we can use to calculate the length $r$ of the hypotenuse as follows: \
$r \ = \ \artanh(\frac{\tanh(b)}{\cos(\alpha)})
\ = \ \artanh(\frac{\tanh(\frac{\ln(2)}{2})}{\cos(\arctan(\sqrt{2}))})
\ = \ \artanh(\sqrt{3^{-1}})$ \\
We use this radius later for the calculation of time delays.

Let $F_1$, $F_2$, and $F_3$ be frame-grids that should be thought of having orientations that are approximately aligned with the straight line from $\Vec{q}$ to $\Vec{u}$, 
i.e.: Some of the most well aligned out of the set of $2^n$ frame-grids.
To take the most direct path, the largest portion of the distance of the path is covered within these three approximately aligned frame-grids.
There are three of them because three vectors can be linearly composed to reach any point within a 3D volume.
The entire path can be split into the following sequence of seven paths:
\\
(1) Quickest path from $F_0$ at location $\Vec{q}$ to $F_1$.\\
(2) Straight light-like path within $F_1$ on line $l_1$.\\
(3) Quickest path from $l_1$ in $F_1$ to $l_2$ in $F_2$.\\
(4) Straight light-like path within $F_2$ on line $l_2$.\\
(5) Quickest path from $l_2$ in $F_2$ to $l_3$ in $F_3$.\\
(6) Straight light-like path within $F_3$ on line $l_3$.\\
(7) Quickest path from $F_3$ to $F_0$ at location $\Vec{u}$.\\
While the straight paths (2), (4), and (6) are located within single frame-grids, the other paths (1), (3), (5), and (7) are not straight and lead through multiple different frame-grids in as little time as possible, thus only covering little distance when compared to the straight paths. 

We now calculate the worst case time delays caused by the four non-straight paths (1), (3), (5), and (7).
The relative rapidity between frame-grids in the binary tree must be smaller or equal to the largest distance between their positions in the hyperbolic space, which is equal to the diameter $2r$ of the sphere. 
The distance travelled along a single light-like edge of any frame-grid of the tree would then have to be \mbox{at most $e^{2r}$} and \mbox{at least $e^{-2r}$} in the frame of reference \mbox{of $F_0$}.
Also the elapsed time would then accordingly have to be \mbox{between $e^{2r}$ and $e^{-2r}$.}
In the absolute worst case the path would lead across a light-like edge that leads in the opposite direction of the path's final destination, thus causing a time delay of $2 e^{2r}$ . 
For any two interlaced frame-grids, the maximal number of steps along light-like edges required in order to move from any vertex of the first frame-grid to a vertex shared with the second frame-grid is just \textit{two}, as can be derived from the rules in \mbox{Subsection \ref{4Rules}} .
This yields a worst case time delay of \mbox{less than $4 e^{2r}$} per move from one frame-grid to the next one. 
Recall that the number of steps required to move
between any two frame-grids that are nodes of the binary tree is at most n.
This results in a time-delay of \mbox{less than $4n e^{2r}$} per non-straight path, of which there are four, hence resulting in \mbox{less than $16n e^{2r}$} .
By inserting the previously calculated value for the radius $r$ we then get: 
\\
$16n\cdot e^{2r} \ = \ 16n\cdot \exp(2 \artanh(\sqrt{3^{-1}})) \ = \ 16n \cdot (2+\sqrt{3}) \ < \ 60 n $ .

We now proceed to the other time delays that are caused by the small deviations of the orientations of the three straight paths (2), (4), and (6) from the direction of the 'direct line' from $\Vec{q}$ to $\Vec{u}$.
These deviations of directions are three small angles.
Let $\varepsilon$ be a small value that is larger than each of these three angles. 
The paths would thus at most be elongated by the factor  $\cos(\varepsilon)^{-1}$ .
Recall that $\Delta t$ is the total time taken and $s := |\Vec{q}-\Vec{u}|$ .
By adding all time-delays, we obtain the bound shown in the following inequality:
\\
$\Delta t \ \
< \ \ s\cdot \cos(\varepsilon)^{-1} + 60n$
\\
$\Delta t \ \
< \ \ s\cdot (1+\varepsilon^2) + 60n$
\hspace*{\fill} ( since \ $\cos(\varepsilon)^{-1} \leq 
1+\varepsilon^2$ \ \ for \ $0 \leq \varepsilon \leq 1$ )
\\
$\frac{\Delta t}{s} \ \ 
< \ \ 1 + \varepsilon^2 + 60\frac{n}{s}$
\\
$\frac{\Delta t}{s} -1 \ \ 
< \ \ \varepsilon^2 + 60\frac{n}{s}$
\\
$1 - \frac{s}{\Delta t} \ \ 
< \ \ \varepsilon^2 + 60\frac{n}{s}$
\hspace*{\fill} ( since \ $1 - \frac{s}{\Delta t} 
\leq \frac{\Delta t}{s} -1$ 
\ \ for \ $0 < s \leq \Delta t$ )
\\
On the left side of this last formula is the error of the fastest average speed, i.e.: the speed of light.
We let the unknown values $\varepsilon$ and $n$ disappear from the right side in what follows.
We firstly rewrite the previous formula in terms of expectation values:
\\
$\EX(1 - \frac{s}{\Delta t}) \ \ 
< \ \ \EX(\varepsilon^2 + 60\frac{n}{s}) \ \
= \ \ \EX(\varepsilon)^2 + \Var(\varepsilon) + 60\frac{n}{s}$
\\
We now need to calculate bounds for the expectation value $\EX(\varepsilon)$ and the variance $\Var(\varepsilon)$ of a minimal $\varepsilon$ depending on $n$. 
The larger the parameter $n$ is, the more frame-grids with different orientations there are available to choose from, which allows for a smaller $\varepsilon$ to exist, which implies a smaller time-delay on the straight paths.
Each frame-grid provides six possible directions for straight light-like paths.
Note that two interlaced frame-grids share directions with each other, such that together they provide only ten rather than twelve such directions.
To count the total number of directions provided by the binary tree, we can therefore count six per node minus two per connection between nodes, i.e.:
\mbox{$m = 6 \cdot 2^n - 2 \cdot (2^n-1) > 2^{n+2}$} , where $m$ is the total number of possible directions of straight light-like paths within frame-grids that are the nodes of the binary tree.
Let $\varphi$ be an angle that is the 
\mbox{\textit{great-circle distance}}
between two points that are sampled from a uniform distribution on a unit sphere.
The probability $p$ that $\varphi$ will be less or equal to $\varepsilon$ will then obviously be \mbox{equal to $\frac{1}{2}(1-\cos(\varepsilon))$ ,} \ i.e.: \
\mbox{$p \ := \ P(\varphi \in [0,\varepsilon] ) 
\ = \ \frac{1}{2}(1-\cos(\varepsilon))$} .
\\
Then \ \mbox{$(1-(1-p)^m)\cdot(1-(1-p)^{m-1})\cdot(1-(1-p)^{m-2})$} \ is the probability that, within a \mbox{great-circle distance} of $\varepsilon$ of a given point, there will be at least three points out of a set of $m$ points sampled from the uniform distribution over the whole unit sphere.
\\
The three directions in 3D space of the paths (2), (4), and (6), when mapped to points on the unit sphere, are the corners of a small spherical triangle that needs to encase the point that is the direction from $\Vec{q}$ to $\Vec{u}$, otherwise the path would miss its target. 
For three randomly sampled corners there is obviously a probability of $\frac{1}{2}$ that this triangle will encase that point. 
To obtain the probability $H(\varepsilon)$ that the whole path exists we therefore multiply the last exponent in the aforementioned probability \mbox{with $\frac{1}{2}$ :}
\\
$H(\varepsilon) \ \ := \ \ (1-(1-p)^m)\cdot(1-(1-p)^{m-1})\cdot(1-(1-p)^{\frac{1}{2}\cdot(m-2)})$
\\
$\geq \ \ (1-(1-p)^{\frac{1}{2}\cdot(m-2)})^3 $
\\
$= \ \ (1-(1-\frac{1}{2}(1-\cos(\varepsilon)))^{\frac{1}{2}\cdot(m-2)})^3 $
\hspace*{\fill} 
(since \ $p = \frac{1}{2}(1-\cos(x))$)
\\
$\geq \ \ (1-(1-\frac{\varepsilon^2}{5})^{\frac{1}{2}\cdot(m-2)})^3 $
\hspace*{\fill} 
(since \ $\frac{1}{2}(1-\cos(\varepsilon))  \geq  \frac{\varepsilon^2}{5}$ \ 
for $0 \leq \varepsilon \leq 1$)
\\
$= \ \ \left(1-{\sqrt{1-\frac{\varepsilon^2}{5}}}^{\ m-2}\right)^3$
$ =: \ G(\varepsilon)$
\\
The smaller function $G$ was introduced in order to simplify the terms. \\
For the minimized $\varepsilon$, the derivatives then give us the\\ \mbox{probability density functions (pdf) $h$ and $g$ :} \\
$h(\varepsilon) \ := \ \frac{\partial}{\partial \varepsilon} 
H(\varepsilon)
$\\
$g(\varepsilon) \ := \ \frac{\partial}{\partial \varepsilon} 
G(\varepsilon)
$\\
As previously shown $G$ is smaller than $H$ for $\varepsilon \leq 1$ . 
Larger values for $\varepsilon$ are not considered since the minimal $\varepsilon$ will be much smaller than $1$ for large $m$.
Because $G(\varepsilon)$ increases slower than $H(\varepsilon)$, it follows that the pdf $g$ is more spread out than the pdf $h$ and therefore its expectation value is larger as well as its variance is larger, i.e.:\\
$\EX_h(\varepsilon) \ < \ \EX_g(\varepsilon)$ \ and \ $\Var_h(\varepsilon) \ < \ \Var_g(\varepsilon) $
\\
We now derive an upper bound for the expectation value:\\
$\EX(\varepsilon) 
\ = \ \EX_h(\varepsilon) 
\ < \ \EX_g(\varepsilon) 
\ = \ 
\int_0^{1} \
\varepsilon\cdot
g(\varepsilon) \ d\varepsilon$
$\ = \ 
\int_0^{1} \
\varepsilon\cdot
\frac{\partial}{\partial \varepsilon} 
G(\varepsilon) \ d\varepsilon $
\\
$\ = \ 
\int_0^{1} \
\varepsilon\cdot
\frac{\partial}{\partial \varepsilon} 
\left(1-{\sqrt{1-\frac{\varepsilon^2}{5}}}^{\ m-2}\right)^3
\ d\varepsilon$
\\
$ \ = \ 
\int_0^{1} \ 
\varepsilon \cdot
\frac{3 \varepsilon}{5}(m-2) \
\sqrt{1-\frac{\varepsilon^2}{5}}^{\ m-4}
\cdot
\left(\sqrt{1-\frac{\varepsilon^{2}}{5}}^{\ m-2} -1\right)^{2}
\ d\varepsilon$
\\
$ \ < \ 
\frac{3m}{5}\cdot \
\int_0^{1} \ 
\varepsilon^2 \cdot
\sqrt{1-\frac{\varepsilon^2}{5}}^{\ m-4}
\cdot
\left(\sqrt{1-\frac{\varepsilon^{2}}{5}}^{\ m-2} -1\right)^{2}
\ d\varepsilon$
\\
$ \ < \ 
\frac{3m}{5}\cdot \
\int_0^{1} \ 
\varepsilon^2 \cdot
\sqrt{1-\frac{\varepsilon^2}{5}}^{\ m-4}
\ d\varepsilon$
\\
$ \ < \
\frac{3m}{5}\cdot \
\int_0^{1} \ 
\varepsilon^2 \cdot
\ \left(1-\frac{\varepsilon^2}{10}\right)^{m-4}
d\varepsilon$
\hspace*{\fill} 
( \ For any $ \kappa \in [0,1] $ :\ )
\\
$ \ = \ 
\frac{3m}{5}\cdot \ 
( \
\int_0^\kappa \ 
\varepsilon^2 \cdot
\left(1-\frac{\varepsilon^2}{10}\right)^{m-4}
d\varepsilon
\ \ + \
\int_\kappa^1 \ 
\varepsilon^2 \cdot
\left(1-\frac{\varepsilon^2}{10}\right)^{m-4}
d\varepsilon
\ )$
\\
$ \ < \ 
\frac{3m}{5}\cdot \ 
( \ \kappa^2 \cdot
\int_0^\kappa \ 
1^{m-4}
\ d\varepsilon
\ \ + \
\int_\kappa^1 \ 
1 \cdot
\left(1-\frac{\varepsilon^2}{10}\right)^{m-4}
d\varepsilon
\ )$
\\
$ \ < \ 
\frac{3m}{5}\cdot \ 
\left( \ \kappa^3
\ \ + \
\left(1-\frac{\kappa^2}{10}\right)^{m-4}
\ \right)$
\hspace*{\fill} 
(\ Let $ \kappa =\ 4\cdot \sqrt[5]{m^{-2}}$ :\ )
\\
$ \ = \ 
\frac{3m}{5}\cdot \ 
\left( \ 4^3\cdot {\sqrt[5]{m^{-6}}}
\ \ + \
\left( 1 -\frac{4^2}{10} \cdot\sqrt[5]{m^{-4}} \right)^{m-4}
\ \right)$
\\
$ \ < \ m\cdot 
4^3\cdot {\sqrt[5]{m^{-6}}}$
$\ \ = \ \ \frac{64}{\sqrt[5]{m}}$
\\
Next we derive an upper bound for the variance:
\\
$\Var(\varepsilon) 
\ = \ \Var_f(\varepsilon) 
\ < \ \Var_g(\varepsilon) $
$\ = \ 
\int_0^{1} \
\varepsilon^2\cdot
g(\varepsilon) \ d\varepsilon$
$\ = \ 
\int_0^{1} \
\varepsilon^2\cdot
\frac{\partial}{\partial \varepsilon} 
G(\varepsilon) \ d\varepsilon $
\\
$ \ < \ 
\frac{3m}{5}\cdot \ 
\left( \ \kappa^4
\ \ + \
\left(1-\frac{\kappa^2}{10}\right)^{m-4}
\ \right)$
\hspace*{\fill} ( \ By analogous steps to earlier. \ )
\\
$ \ = \ 
\frac{3m}{5}\cdot \ 
\left( \ 3^4 \cdot \sqrt[5]{m^{-7}}
\ \ + \
\left(1-\frac{3^2}{10}\cdot 
\sqrt[20]{m^{-14}}
\right)^{m-4}
\ \right)$
\hspace*{\fill} 
( \ with $ \kappa =\ 3\cdot \sqrt[20]{m^{-7}}$ \ )
\\
$ \ < \ m \cdot 
3^4 \cdot \sqrt[5]{m^{-7}}$
$ \ \ = \ \ \frac{81}{\sqrt[5]{m^2}}$
\\
So to summarize:\\ $\EX(\varepsilon) 
\ < \ \frac{64}{\sqrt[5]{m}}$
 \ and \ 
$\Var(\varepsilon) 
\ < \ \frac{81}{\sqrt[5]{m^2}}$
\\
We can now insert these two values into our earlier formula and afterwards set the parameter $n$ depending on $s$:
\\
$\EX(1 - \frac{s}{\Delta t}) \ \ 
< \ \ \EX(\varepsilon^2 + 60\frac{n}{s}) \ \
= \ \ \EX(\varepsilon)^2 + \Var(\varepsilon) + 60\frac{n}{s}$
\\
$= \ \ (\frac{64}{\sqrt[5]{m}})^2
+ \frac{81}{\sqrt[5]{m^2}}
+ 60\frac{n}{s}$
$\ \ = \ \ (\frac{64}{\sqrt[5]{{2^n}}})^2
+ \frac{81}{\sqrt[5]{{(2^n)^{2}}}}
+ 60\frac{n}{s}$
\\
$\ \ < \ \ 2^{(12-\frac{2}{5}n)}
+ 60\frac{n}{s}$
\hspace*{\fill} 
( \ Let $n \ := \ 
\lceil \frac{5}{2}\cdot \log_2(s) +10 \rceil$ \ :  )
\\
$\ \ = \ \ 2^{(12-\frac{2}{5}
\lceil \frac{5}{2}\cdot \log_2(s) +10 \rceil)}
+ 60\frac{\lceil \frac{5}{2}\cdot \log_2(s) +10 \rceil}{s}$
\\
$\ \ < \ \ \frac{256}{s}
+ 60\frac{\lceil \frac{5}{2}\cdot \log_2(s) +10 \rceil}{s}$
\\
$\ \ < \ \ 
150\cdot\frac{ \log_2(s) +6 }{s}$
\\
\textbf{q.e.d.}
\end{proof}
{\ }
\\
\textbf{Examples for Theorem \ref{luminal}}
\\
Here we provide examples where we use \mbox{Theorem \ref{luminal}} to calculate upper bounds on the deviation of the speed of light travelling one metre, one parsec, and one \r{a}ngström in any direction.
As the Planck length $\ell_P$ is often speculated to be the fundamental length, here we set the light-like edges to cover a distance equal to one $\ell_P$ within their respective frame of reference. 
To convert to SI units, we multiply with the speed of light constant 
$c\ =\ 299\ 792\ 458\ \frac{metre}{second}$ .
\\
For light travelling one metre we get:\ \
$s\ =\ \frac{metre}{\ell_P} 
\ =\ 6.25..\cdot10^{34}$\\
We then insert this $s$ into the formula of the theorem:
\\
deviation
$\ \ < \ \ 
c\cdot150\cdot\frac{ \log_2(s) +6 }{s}$
$\ \ = \ \ 
8.84..\cdot10^{-23} \frac{metre}{second}$\\
This indicates that the speed of light deviates only by an extremely small speed, which means high accuracy for one metre.
\\
For light travelling one parsec we get:\ \
$s \ = \ \frac{parsec}{\ell_P} 
\ = \ 1.93..\cdot10^{51}$\\
deviation
$\ \ < \ \ 
c\cdot150\cdot\frac{ \log_2(s) +6 }{s}$
$\ \ = \ \ 
4.15..\cdot10^{-39} \frac{metre}{second}$\\
We can see that for this astronomical distance the speed of light becomes 15 orders of magnitude more accurate, theoretically. 
But more interestingly; is it still accurate at microscopic distances, such as one \r{a}ngström? \\
For light travelling one \r{a}ngström we get:\ \
$s \ = \ \frac{\text{\r{a}ngström}}{\ell_P} 
\ = \ 6.25..\cdot10^{24}$
\\
deviation
$\ \ < \ \ 
c\cdot150\cdot\frac{ \log_2(s) +6 }{s}$
$\ \ = \ \ 
6.42..\cdot10^{-13} \frac{metre}{second}$
\\
In conclusion, even at a distance as short as one \r{a}ngström, the speed of light is still so accurate that the speed by which it can deviate is still around a hundred times slower than the speed of fingernail growth or the speed of continental drift.
It should also be noted that with more work our bound could be optimized even further.

\mbox{Theorem \ref{luminal}} also provided the general error term $ \mathcal{O}\left(\frac{\log s}{s}\right) $ that applies to not only the speed of light on $\mathcal{M}_\infty$, but to the speed on light on other GRIDS as well. 
A similar term will reappear concerning the proper time interval in the following subsection.
\\

\subsubsection{Proper Time Interval}
In special relativity, the \textit{proper time interval} $\Delta \tau$ of a geodesic between between two events is given by the formula $\Delta \tau = \sqrt{(\Delta t)^2 - (\frac{s}{c})^2}$ , where $s$ is the spatial distance between the two events and $\Delta t$ is the time difference. 
While $\Delta t$ and $s$ differ depending on the inertial frame of reference, $\Delta \tau$ does not, i.e.: 
$\Delta \tau$ is Lorentz invariant. 
The difference between $\Delta t$ and $\Delta \tau$ is called \textit{time dilation}. 
$\Delta \tau$ can also be written in terms of a Lorentz factor $\gamma$ as follows:
\mbox{$\Delta \tau = \Delta t \cdot \gamma  = \Delta t \cdot \sqrt{1 - (\frac{v}{c})^2}$
, } where $v$ is the velocity of an inertial observer following the geodesic between the two events and $\Delta \tau$ would then be the time that elapsed on their clock.
Light-like paths have a proper time interval equal to zero.
While a geodesic between two points in a Euclidean space is the shortest possible path, in a Minkowski space, conversely, a geodesic is the \textit{longest} possible path, i.e.: 
The path between two given events forwards in time with the longest possible proper time interval. $\mathcal{M}_\infty$ is a \textit{directed acyclic graph} (DAG).
Recall the geodesic distance on graphs that was a shortest path metric between vertices introduced in \mbox{Section \ref{shortest}} . 
Analogously we now use a geodesic distance on DAGs that is a \textit{longest} path metric between vertices. 
The length of such paths is determined by counting the number of steps, there are, however, multiple different possible methods for counting the number of steps for different types of GRIDS.
For GRIDS that consist of only time-like edges, one simply counts the number of steps along the time-like edges.
For GRIDS that consist of only light-like edges, one should count the number of direction changes of a path along light-like edges.
For GRIDS that consist of both light-like as well as time-like edges, such as our $\mathcal{M}_\infty$, we count the number of time-like edges while \textit{not} counting the light-like edges along a path that consists of both time-like as well as light-like edges.
In the following theorem, we show how accurately this geodesic distance approximates the formula of the proper time interval, 
for any $v$ that is slower than the speed of light $c=1$ \ by some arbitrarily small \mbox{constant $\delta$ .}

\begin{samepage}
\begin{theorem}\label{temporal}
{\textbf{Accuracy of the Proper Time Interval on $\mathcal{M}_\infty$}
}\\
For any arbitrarily small positive constant $\delta$ :\\
For all frame-grids $F_0$ of $\mathcal{M}_\infty$ :\\
Given two randomly selected vertices $Q$ and $U$ of $F_0$ :\\
Let\ $ s$ and $\Delta t$ be the spatial distance and the temporal distance between\\ the integer coordinates of $Q$ and $U$ on $F_0$ and let $\Delta\tau := \sqrt{(\Delta t)^2-s^2}$ .\\
Let \ $d(Q,U)$ be the geodesic distance between $Q$ and $U$\\ that is a \mbox{longest path metric} on $\mathcal{M}_\infty$ .
\\
\mbox{If \ $ \frac{s}{\Delta t} \leq 1-\delta $\ holds, then the following limiting behavior holds :}
\\
\[\text{relative error of the proper time interval} 
\ \ := \ \ \frac{\Delta\tau - d(Q,U)}{\Delta\tau}
\ \ = \ \ \mathcal{O}\left(\frac{\log \Delta \tau}{\Delta \tau}\right)\]
\end{theorem}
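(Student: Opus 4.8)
The plan is to follow the pattern of Theorems \ref{euclid} and \ref{luminal}: obtain one inequality for free from the unit-embedding property of $\mathcal{M}_\infty$, and obtain the matching inequality by exhibiting an explicit near-geodesic walk whose proper-time deficit is governed by the number $m\approx 2^{n}$ of available frame-grids, which is tuned to $n\approx\log_2\Delta\tau$ at the end; the claimed $\mathcal{O}$ is read, as in Theorem \ref{luminal}, as a bound on the expected relative error over random $Q,U$ with $s/\Delta t\le 1-\delta$, as $\Delta\tau\to\infty$. The cheap direction $d(Q,U)\le\Delta\tau$ comes from the embedding of $\mathcal{M}_\infty$ into Minkowski spacetime noted when $\mathcal{M}_1$ was defined: every time-like edge maps to a time-like segment of one fixed proper-time unit and every light-like edge to a null segment, so a directed walk from $Q$ to $U$ maps to a future-directed causal curve whose proper time equals its number of time-like edges; since the straight geodesic is the longest causal curve between two time-like separated events, that number is $\le\Delta\tau$, whence $d(Q,U)\le\Delta\tau$ and the relative error is nonnegative.

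For the upper bound, let $\Vec{v}:=(\Vec{q}-\Vec{u})/\Delta t$ be the velocity of the geodesic joining $Q$ and $U$; by hypothesis $|\Vec{v}|=s/\Delta t\le 1-\delta$, so its rapidity is at most $\artanh(1-\delta)<\infty$ and, since $\Delta\tau=\Delta t\sqrt{1-(s/\Delta t)^{2}}\ge\sqrt{\delta}\,\Delta t$, the scales $\Delta t$ and $\Delta\tau$ are comparable. By Theorem \ref{fullLorentz}, together with the spherical-binary-tree refinement from the proof of Theorem \ref{luminal} and the three-perpendicular-circles construction in the last paragraph of the proof of Theorem \ref{fullLorentz}, $\mathcal{M}_\infty$ contains, reachable from $F_0$ in at most $n$ frame-grid hops, a family of $m=\Theta(2^{n})$ frame-grids whose rest velocities relative to $F_0$ are approximately equidistributed over the ball of rapidities up to $\artanh(1-\tfrac{\delta}{2})$ in the $3$-dimensional hyperbolic velocity space. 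Pick from this family a frame-grid $F^{\ast}$ whose rest velocity differs from $\Vec{v}$ by at most $\varepsilon$; then in the coordinates of $F^{\ast}$ the events $Q$ and $U$ have spatial separation $\mathcal{O}(\varepsilon\,\Delta\tau)$ and temporal separation $\Delta\tau\,(1+\mathcal{O}(\varepsilon^{2}))$. Now route the walk in three stages, just as the seven-part path of Theorem \ref{luminal}: a quickest transfer walk from the vertex of $F_0$ at $\Vec{q}$ into $F^{\ast}$; inside $F^{\ast}$, a walk that first absorbs the small spatial displacement by axis-aligned light-like zig-zags and then climbs straight up the time axis along time-like edges; and a quickest transfer walk from $F^{\ast}$ back to the vertex of $F_0$ at $\Vec{u}$. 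As in Theorem \ref{luminal}, each hop between consecutive interlaced frame-grids costs at most two light-like edges whose lengths in $F_0$'s frame lie in $[e^{-2r},e^{2r}]$ for a bounded $r$, so the two transfer walks consume $\mathcal{O}_{\delta}(n)$ of the temporal budget and the interior zig-zags a further $\mathcal{O}(\varepsilon\,\Delta\tau)$; hence $d(Q,U)\ge\Delta\tau-\mathcal{O}(\varepsilon\,\Delta\tau)-\mathcal{O}_{\delta}(n)$.

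It then remains to take expectations and choose $n$. The smallest $\varepsilon$ available among $m$ approximately-equidistributed points of a $3$-dimensional region containing $\Vec{v}$ in its interior obeys $\EX(\varepsilon)=\mathcal{O}(m^{-1/3})$, via the same $(1-(1-p)^{m})$-type computation as in Theorem \ref{luminal} but with $p=\Theta(\varepsilon^{3})$; this is where $s/\Delta t\le 1-\delta$ is essential, keeping $\Vec{v}$ away from the light cone where the local density of available boosts degrades. Therefore $\EX(\Delta\tau-d(Q,U))\le\mathcal{O}(\Delta\tau)\,\EX(\varepsilon)+\mathcal{O}_{\delta}(n)=\mathcal{O}(\Delta\tau\,2^{-n/3})+\mathcal{O}_{\delta}(n)$, and setting $n=\lceil 3\log_2\Delta\tau\rceil$ makes the first term $\mathcal{O}(1)$ and the second $\mathcal{O}_{\delta}(\log\Delta\tau)$. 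Dividing by $\Delta\tau$ and combining with $d(Q,U)\le\Delta\tau$ gives $\EX\big(\tfrac{\Delta\tau-d(Q,U)}{\Delta\tau}\big)=\mathcal{O}(\log\Delta\tau/\Delta\tau)$, uniformly over the direction of $\Vec{v}$, which is the isotropic statement claimed.

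I expect two points to carry the real difficulty. The first is making the transfer bookkeeping rigorous inside the longest-path (DAG) metric: one must verify that routing through $\le n$ auxiliary frame-grids costs only an additive $\mathcal{O}_{\delta}(n)$ in the count of time-like edges rather than corrupting it multiplicatively, and this is exactly where $s/\Delta t\le 1-\delta$ enters, bounding the rapidities and hence the distortions $e^{\pm 2r}$ that already appeared in Theorem \ref{luminal}. The second, which I expect to be the main work, is the covering estimate: one must show that the frame-grids of $\mathcal{M}_n$ genuinely populate a full $3$-dimensional neighbourhood of every sub-light velocity with the $m^{-1/3}$ rate, which needs the one-circle equidistribution of Lemma \ref{circle1} and Corollary \ref{circle2} upgraded --- through the composition of perpendicular circular arcs sketched in the proof of Theorem \ref{fullLorentz} --- to honest $3$-dimensional equidistribution, together with a lower bound on the local point density away from the light cone.
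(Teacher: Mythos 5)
Your proposal is sound at the level of rigor the paper itself operates at, and it is in fact considerably more detailed than the paper's own proof, which is only a two-paragraph sketch-by-analogy with Theorem \ref{luminal}. But you take a genuinely different route at the decisive step. The paper prescribes a path containing \emph{four} approximately aligned straight \emph{time-like} segments, lying in four different frame-grids whose velocities bracket the target velocity, alternating with five transfer walks --- the direct analogue of the three light-like segments of Theorem \ref{luminal}, so that the proper-time deficit from misalignment is quadratic in $\varepsilon$ and one tunes $n$ roughly as in that theorem. You instead route through a \emph{single} frame-grid $F^{\ast}$ within $\varepsilon$ of the target velocity, climb its time axis, and absorb the residual $\mathcal{O}(\varepsilon\,\Delta\tau)$ spatial displacement with axis-aligned light-like zig-zags, which cost coordinate time but contribute nothing to the longest-path count; this makes the deficit only \emph{linear} in $\varepsilon$, which you correctly compensate by taking the tree deeper ($n\approx 3\log_2\Delta\tau$ against the paper's $\approx\tfrac{5}{2}\log_2 s$), landing on the same $\mathcal{O}(\log\Delta\tau/\Delta\tau)$ bound. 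What each buys: the paper's four-segment decomposition carries over verbatim to GRIDS with only time-like edges (a generality the paper explicitly claims at the end of its proof), whereas your construction exploits the mixed light-like/time-like structure specific to $\mathcal{M}_\infty$ and is arguably simpler there. You also supply two things the paper omits entirely: the cheap direction $d(Q,U)\le\Delta\tau$ via the Minkowski embedding and the reverse triangle inequality, and an honest identification of the two genuinely unproved ingredients (additive transfer bookkeeping in the longest-path metric, and upgrading the one-circle equidistribution of Lemma \ref{circle1} to a quantitative three-dimensional covering rate) --- neither of which the paper's proof addresses either.
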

{\ }\end{samepage}
\begin{proof}
Recall that the proofs of \mbox{Theorem \ref{euclid}} and \mbox{Theorem \ref{luminal}} 
both involved the construction of a sufficiently short or quick path. 
Analogously, the proof of \mbox{Theorem \ref{temporal}} here would involve the construction of a sufficiently long path, 
i.e.: A path from the vertex $Q$ to the vertex $U$ forwards in time along sufficiently many time-like edges.
The quick path in the proof of \mbox{Theorem \ref{luminal}} contained \textit{three} approximately aligned straight \textit{light}-like paths that alternated with four paths that quickly transition between frame-grids. 
Analogously, the optimal long path here would contain \textit{four} approximately aligned straight \textit{time}-like paths that alternate with five paths that quickly transition between frame-grids.
This increase in numbers stems from the additional degree of freedom provided by the random selection of the time coordinates of $Q$ and $U$ , which was not present in back in \mbox{Theorem \ref{luminal}} .

Our path starts at $Q$ on $F_0$. 
Recall that $F_0$ , like all other frame-grids, corresponds to a point in the hyperbolic space. 
The initial part of our path leads from $F_0$ along interlacings across other frame-grids through hyperbolic space towards the vicinity of the point corresponding to the velocity vector from $Q$ to $U$ . 
This initial part of the path can be thought of as a quick acceleration nearing the desired velocity.
For the extreme cases where $Q$ and $U$ are selected such that the velocity $\frac{s}{\Delta t}$ is very close to the speed of light while the time $\Delta \tau$ is too minuscule to allow for the required acceleration, we introduced the constant $\delta$ that limits the velocities.
$\delta$ can be selected to be arbitrarily small, so long as it is positive and constant in order to allow for the error's limiting behavior (formulated in the theorem) when $\Delta \tau$ tends towards infinity.

The path continues through a tree that is analogous to the binary tree that we employed in the proof of \mbox{Theorem \ref{luminal}} , with the difference being that the nodes of the tree populate a 3D region instead of populating the spherical surface. 
The tree is used to in order to move to a frame-grid that is close to a desired velocity rather than being close to a desired orientation.
The rest of the proof is analogous to the proof of \mbox{Theorem \ref{luminal}} and is thus not described further here.
The proof technique described here also works to show the accuracy of the proper time interval on GRIDS consisting solely of time-like edges \mbox{as well as GRIDS consisting solely of light-like edges.}\\
\textbf{q.e.d.}
\end{proof}
$\ $

\subsubsection{Discussion of the Theorems}
For $\mathcal{M}_\infty$, our previous two theorems have shown the accuracy of the speed of light as well as the accuracy of the proper time interval for any inertial frame of reference. 
The \textit{proper length} of an object can be calculated from the time taken for a flash of light to travel back and forth along the object when measured by a clock in the rest frame of the object.
Therefore, the proper length $l_0$ measured on $\mathcal{M}_\infty$ inherits the same accuracy as the speed of light and the proper time interval, 
i.e.: The error of the proper length $l_0$ relative to itself is $ \mathcal{O}\left(\frac{\log l_0}{l_0}\right)$ .
Analogously to the geodesic distance in corollary \ref{voidOfCoordinates}, measuring the proper lengths between multiple spatial locations of the same inertial frame in $\mathcal{M}_\infty$ then yields exactly the values that one would expect from distances between points in a 3D Euclidean space, up to minuscule errors.
We hypothesize that the familiar 3D Euclidean space of our everyday reality is actually exactly this feature of a GRIDS.

Recall that Theorems \ref{luminal} and \ref{temporal} are concerned with paths that lead through multiple frame-grids but ultimately return to the same frame-grid that they started on, since these paths connect two vertices that belong to the \textit{same} frame-grid.
In some form, these theorems would also be valid for paths between two vertices that belong to the same finite subset of interlaced frame-grids rather than just the same single frame-grid.
However, the larger these subsets of frame-grids, the lower the resulting accuracies. 
On the other hand, this drop in accuracy does not occur if one is concerned with the geodesic path from a single vertex to a set of vertices that all reside at approximately the same location but are spread over all frame-grids.

\subsubsection{Quantum Superposition}
The following short thought experiment shows some emergent quantum properties of GRIDS.
Conway's famous so-called 'game of life' \cite{games1970fantastic} is a set of simple rules that determines how binary states on a 2D square grid change over time.
Conway's simple rules can generate a surprisingly diverse set of phenomena such as oscillating self-sustaining structures that can collide with each other and annihilate or produce other such structures.
Now imagine a modified version of this game that can be played on a GRIDS instead of a simple grid.
Note that two frame-grids that were not interlaced directly can have an arbitrarily low density of shared vertices, while being located arbitrarily close in the hyperbolic space.
Many different versions of the same oscillating self-sustaining structures could occupy the same space with almost no interference between them, due to the low density of shared nodes between arbitrary frame-grids,
This means that GRIDS, such as $\mathcal{M}_\infty$, possess the causal structure of quantum superpositions. 
Further investigation of these features of GRIDS is outside the scope of our paper.
Note that such quantum properties were a natural side effect of having searched for the simplest graphs that follow (3+1)-dimensional special relativity.
\\

\section{Conclusion}
The lengths of the shortest paths, or geodesic paths, on simple square lattices are non-Euclidean. 
It was thus often assumed that the same lack of isotropy would apply to all structures of similar simplicity and regularity.
However, in this paper we showed that similar structures can exhibit isotropy and even yield the full (3+1)-dimensional Minkowski spacetime, when measured using simple geodesic paths.
We call such structures GRIDS, which stands for Graphs that are Relativistic, Isotropic, Deterministic, and Simple. 
Due to Occam's razor, our key insight further increases the plausibility of the theories that assume spacetime to be a discrete structure, such as causal set theory, loop quantum gravity, and the Wolfram physics project.
In further theorems, we then demonstrated the accuracy of the speed of light as well as the accuracy of the proper time interval. 
Causal structures reminiscent of quantum superposition emerged as a side-effect.

Future research, in pursuit of low hanging fruit, should primarily be concerned with merging the GRIDS concept with theories that previously managed to derive aspects of general relativity \cite{einstein1915feldgleichungen} from discrete structures, such as Gorard's theory \cite{gorard2020some}.
While we provided simple rule sets that fully characterize examples of GRIDS, we did not provide the actual graph rewriting rules, which we recommend to be determined after the merging with the existing theories.
We hope that soon an even deeper understanding of the structure underlying spacetime will be gained, which shall allow for further steps towards the ultimate theory of fundamental physics.
\\
\\
\newpage
\bibliographystyle{plain}
\bibliography{references}

\begin{figure}[h]
     \centering
         \includegraphics[scale=0.57]{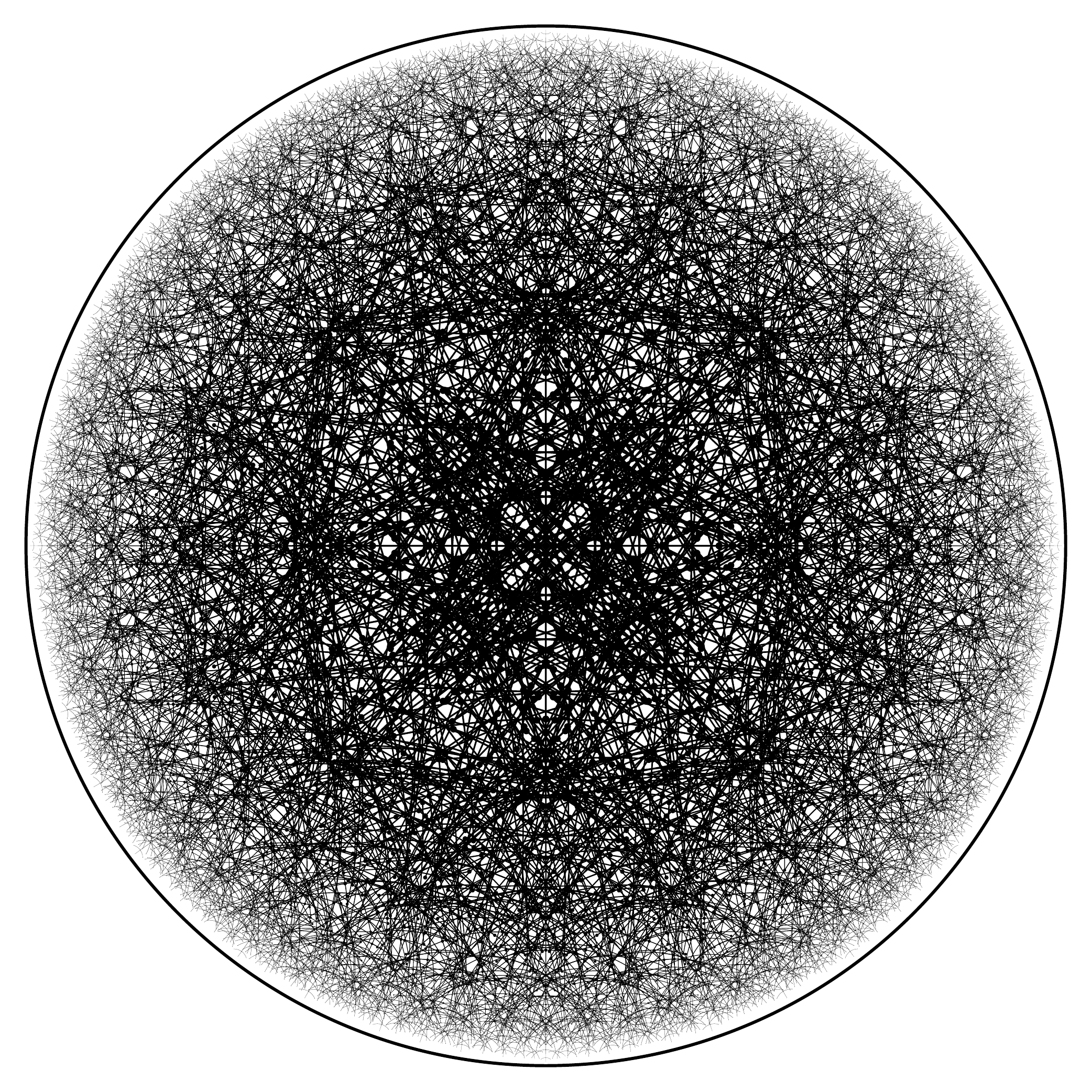}
         \caption{$\mathcal{M}_8$}
         \label{fig:y equals x}
\end{figure}

\end{document}